 \newcommand{\N}{{\mathbb{N}}}
 \newcommand{\R}{{\mathbb{R}}}
 \newcommand{\Z}{{\mathbb{Z}}}
 \newcommand{\f}{\frac}
 \newcommand{\ol}{\overline}
 \newcommand{\wti}{\widetilde  }
\newcommand{\hatt}{\widehat} 
\newcommand{\beq}{\begin{equation}}
\newcommand{\eeq}{\end{equation}}
\newcommand{\bdm}{\begin{displaymath}}
\newcommand{\edm}{\end{displaymath}} \newcommand{\ba}{\begin{align}}
\newcommand{\ea}{\end{align}} \newcommand{\bpf}{\begin{proof}}
\newcommand{\epf}{\end{proof}}
\newcommand{\la}{\langle} \newcommand{\ra}{\rangle}
\newcommand{\mathcalB}{\mathcal{B}}
\newcommand{\mathcalD}{\mathcal{D}}
\newcommand{\mathcalG}{\mathcal{G}}
\newcommand{\mathcalS}{{\mathcal{S}}}
\newtheorem{thm}{Theorem} \newtheorem{prop}[thm]{Proposition}
\newtheorem{lem}[thm]{Lemma} 
\newtheorem{cor}[thm]{Corollary}
\theoremstyle{definition}
 \newtheorem{remark}[thm]{Remark}
\newcounter{theoremi}[thm] 
\numberwithin{thm}{section} \numberwithin{equation}{section}
\begin{document}

\title[Ballistic Transport]{Ballistic Transport for the
Schr\"odinger Operator with Limit-Periodic or Quasi-periodic Potential in Dimension Two}
\author[Y.~Karpeshina, Y.-R.~Lee, R.~Shterenberg, G.~Stolz ]{Yulia Karpeshina, Young-Ran~Lee, Roman Shterenberg and G\"unter Stolz}

\address{Department of Mathematics, Campbell Hall, University of Alabama at Birmingham,
1300 University Boulevard, Birmingham, AL 35294.}
\email{karpeshi@uab.edu}%

\address{Department of Mathematics, Sogang University, 5 Baekbeom-ro,
    Mapo-gu, Seoul, 121-742, South Korea.}%
\email{younglee@sogang.ac.kr}

\address{Department of Mathematics, Campbell Hall, University of Alabama at Birmingham,
1300 University Boulevard, Birmingham, AL 35294.}
\email{shterenb@math.uab.edu}

\address{Department of Mathematics, Campbell Hall, University of Alabama at Birmingham,
1300 University Boulevard, Birmingham, AL 35294.}
\email{stolz@uab.edu}

\thanks{Supported in part by NSF-grants DMS-1201048 (Y.K.) and DMS-1069320 (G.S.),  National Research Foundation of Korea (NRF) grants funded by the Korea government (MSIP) No.\ 2011-0013073 and (MOE) No.\ 2014R1A1A2058848 (Y.-R.L.) and Simons Foundation grant No. 312879 (R.S)}
\thanks{The authors are thankful to the Isaac Newton Institute for Mathematical Sciences (Cambridge University, UK) for support and hospitality during the program Periodic and Ergodic Spectral Problems where work on this paper was undertaken.}

\date{\today}

\begin{abstract}
We prove the existence of ballistic transport for the
Schr\"odinger operator with limit-periodic or quasi-periodic potential in dimension two. This is done under certain regularity assumptions on the potential which have been used in prior work to establish the existence of an absolutely continuous component and other spectral properties. The latter include detailed information on the structure of generalized eigenvalues and eigenfunctions. These allow to establish the crucial ballistic lower bound through integration by parts on an appropriate extension of a Cantor set in momentum space, as well as  through stationary phase arguments.
\end{abstract}

\maketitle


\section{Introduction}

\subsection{Prior results on ballistic transport} \label{prior}

It is well known that the spectral and dynamical properties of Schr\"odinger operators $H=-\Delta+V$, either discrete in ${\mathcal H} = \ell^2(\Z^d)$ or continuous in ${\mathcal H} = L^2(\R^d)$, are related. A general correspondence of this kind is given by the RAGE theorem, e.g.\ \cite{ReedSimon}: Stated briefly, it says that solutions $\Psi(\cdot,t) = e^{-iHt} \Psi_0$ of the time-dependent Schr\"odinger equation are `bound states' if the spectral measure $\mu_{\Psi_0}$ of the initial state $\Psi_0$ is pure point, while $\Psi(\cdot,t)$ is a `scattering state' if $\mu_{\Psi_0}$ is (absolutely) continuous. However, knowing the spectral type is not sufficient to quantify transport properties more precisely, for example in terms of diffusion exponents $\beta$. The latter, if they exist, characterize how time-averaged $m$-moments ($m>0$)
\begin{equation} \label{moments}
\langle \langle X_{\Psi_0}^m \rangle \rangle_T := \frac{2}{T} \int_0^{\infty} \exp\left(-\frac{2t}{T}\right) \| X^{m/2} \Psi(\cdot,t)\|^2_{\mathcal H} \,dt
\end{equation}
of the position operator $X$ grow as a power $T^{m\beta}$ of time $T$, where $(Xu)(x) = |x|u(x)$. The special cases $\beta = 1$, $\beta=1/2$ and $\beta=0$ are interpreted as ballistic transport, diffusive transport, and dynamical localization, respectively.

In general, due to the possibility of fast traveling small tails, $\beta$ may depend on $m$. Here we will restrict our attention to the most frequently considered case of the second moment $m=2$. The ballistic upper bound
\begin{equation} \label{genballistic}
\| X \Psi(\cdot,t) \|^2_{\mathcal H}  \le C_1(\Psi_0) T^2 + C_2(\Psi_0),
\end{equation}
and thus also its averaged version $\langle \langle X_{\Psi_0}^2 \rangle \rangle_T \le C_1(\Psi_0) T^2 + C_2(\Psi_0)$, is known to hold for general potentials $V$ with relative $\Delta$-bound less than one (in particular all bounded potentials) and initial states
\begin{equation} \label{RScond}
\Psi_0 \in \mathcal{S}_1:= \{f\in L^2: |x|f \in L^2,\, |p|f \in L^2\},
\end{equation}
see \cite{RaSi}. Here $p$ is the momentum operator, i.e.\ multiplication by the variable in momentum space. As most authors, we will work here with the Abel mean used in \eqref{moments}, but mention that the existence of a ballistic upper bound can be used to show that Abel means and Cesaro means $T^{-1} \int_0^T \ldots\,dt$ lead to the same diffusion exponents (see for example Theorem~2.20 in \cite{DT2010}).

In the late 1980s and 1990s methods were developed which led to more concrete bounds on diffusion exponents by also taking fractal dimensions of the associated spectral measures into account and showing that this gives lower transport bounds. In particular, again for the special case of the second moment, the Guarneri-Combes theorem \cite{G,G2,C,L} says that
\begin{equation} \label{GuarneriCombes}
\langle \langle X_{\Psi_0}^2 \rangle \rangle_T \ge C_{\Psi_0} T^{2\alpha/d}.
\end{equation}
for initial states $\Psi_0$ with uniformly $\alpha$-H\"older continuous spectral measure (and satisfying an additional energy bound in the continuum case \cite{C}). In dimension $d=1$ this says that states with an absolutely continuous spectral measure ($\alpha=1$) also will have ballistic transport (as by \eqref{genballistic} the transport can not be faster than ballistic). In particular, this means that in cases where the spectra of one-dimensional Schr\"odinger operators with limit or quasi-periodic potentials were found to have an a.c.\ component, e.g.\ \cite{AS,CD,DS,El,MC,MP,PF}, one also gets ballistic transport.

The bound (\ref{GuarneriCombes}) does not suffice to conclude ballistic transport from the existence of a.c.\ spectrum in dimension $d\ge 2$. In fact, examples of Schr\"odinger operators with absolutely continuous spectrum, but slower than ballistic transport have been found: A two-dimensional  `jelly-roll' example with a.c.\ spectrum and diffusive transport is discussed in \cite{KL}, while \cite{BSB} provides examples of separable potentials in dimension $d\ge 3$ with a.c.\ spectrum and sub-diffusive transport.

In general, growth properties of generalized eigenfunctions have to be used in addition to spectral information for a more complete characterization of the dynamics. General relations between eigenfunction growth and spectral type as well as dynamics were found in  \cite{KL}. A series of works studied one-dimensional models with $\alpha<1$ and related the dynamics to transfer matrix bounds, e.g.\ \cite{DLS,DT,DT2,DT3,GKT,JSBS,Tch}. In particular, these methods can establish lower transport bounds in models with sub-ballistic transport, such as the Fibonacci Hamiltonian and the random dimer model.

Much less has been done for $d\ge 2$.  Ballistic lower bounds and thus the existence of waves propagating at non-zero velocity are known only for $V=0$, where this is classical, e.g.\ \cite{ReedSimon}, and for periodic potentials \cite{AK}. Scattering theoretic methods show that this extends to potentials of sufficiently rapid decay, or sufficiently rapidly decaying perturbations of periodic potentials. However, to our knowledge there are no prior results on ballistic lower bounds for multidimensional Schr\"odinger operators with bounded potentials which are not asymptotically periodic. Providing two such results in dimension $d=2$, one for a class of limit-periodic potentials and one for a class of quasi-periodic potentials, is our main goal here.

For both of these examples, the existence of an absolutely continuous component in the spectrum has been shown in earlier works \cite{KL3,KS,KS1}. Essentially, what we do here is to show that the properties of generalized eigenfunctions which were obtained in these works can be used to also conclude ballistic transport.


\subsection{Models and Assumptions}

We study the initial value problem
 \begin{equation}\label{IVP}
   i \frac{\partial \Psi}{\partial t}=H\Psi, \ \ \ \Psi (\vec x,0)=\Psi _0(\vec x)
    \end{equation}
for  the Schr\"odinger operator
    \begin{equation}\label{limper}
    H=-\Delta+V(\vec x)
    \end{equation}
    in two dimensions, $\vec x \in \R^2$. For the potential $V(\vec x)$ we consider two cases, a class of limit-periodic potentials and a class of quasi-periodic potentials.

\subsubsection{Limit periodic case}   Here we assume that the potential can be written as
    \begin{equation}\label{V}
    V(\vec x)=\sum _{r=1}^{\infty}V_r(\vec x),
    \end{equation}
   where $\{V_r\}_{r=1}^{\infty}$ is a family of periodic potentials
with doubling periods. More precisely, $V_r$ has orthogonal periods $2^{r-1}\vec{d_1},\ 2^{r-1}\vec{d _2}$.
Without loss of generality, we assume that  $\vec
d_1=(d_1,0)$, $\vec d_2=(0,d_2)$ and $\int _{Q_r}V_r(\vec x)d\vec x=0$, where $Q_r = [0, 2^{r-1}d_1] \times [0, 2^{r-1} d_2]$
is the elementary cell of periods corresponding to $V_r$. We also
assume that all $V_r$ are real trigonometric polynomials with the
lengths growing at most linearly in the  period. Namely, there exists
a positive number $R_0 <\infty$ such that each potential admits the
Fourier representation
    \beq\label{1.2a}
    V_r(\vec x)=\sum _{q \in  \Z^2\setminus \{0\},\ 2^{-r+1}|q|<R_0} v_{r,q}
    e^{ i \la 2^{-r+1}\tilde q,\vec x \ra} ,\ \ \ \tilde q=2\pi
    \left(\frac{q_1}{d_1}, \frac{q_2}{d_2}\right),
    \eeq
$\langle \cdot ,\cdot\rangle $ being the canonical scalar product and $|\cdot|$ the corresponding norm
in $\R^2$.  We assume that the series \eqref{V} converges super-exponentially fast:
\beq \label{be}
    \sum_{q} |v_{r,q}|<\hat C\exp(-2^{\eta r})
    \eeq
for some $\eta>\eta_0 >0$ uniform in $r$. Without loss of generality we can set $\hat C=1$.

\subsubsection{Quasi-periodic case} We assume that $V$ is real and can be written in the form
\begin{equation}\label{Vq}
V(\vec x)=\sum\limits_{   {\bf s}_1,   {\bf s}_2\in\Z^2,\,   {\bf s}_1+\alpha   {\bf s}_2\in{ \mathcal
S}} V_{   {\bf s}_1,   {\bf s}_2}e^{2\pi i\langle    {\bf s}_1+\alpha    {\bf s}_2,\vec x\rangle},
\end{equation} where  $\alpha$ is an irrational number in $(0,1)$ and ${ \mathcal S}$ a finite subset of $\R^2$. Note that $V$ real means that ${\mathcal S}$ is symmetric with respect to ${\bf 0}$ and $V_{-{\bf s}_1, -{\bf s}_2} = \overline{V_{   {\bf s}_1,   {\bf s}_2}}$.

Two additional technical conditions, one on $\alpha$ and one on ${\mathcal S}$, are used to avoid certain degeneracies:

\begin{description}

\item[{\bf A1}] There are $N_0,N_1>0$
 such that if $|n_1|+|n_2|+|n_3|>N_1$ then
\begin{equation}
n_1+\alpha n_2+\alpha^2 n_3=0\ \ \  \hbox{or}\ \ \ |n_1+\alpha
n_2+\alpha^2 n_3|> (|n_1|+|n_2|+|n_3|)^{-N_0}. \label{condition}
\end{equation}

\item[{\bf A2}] If ${\bf s}\in {\mathcal S}$ and ${\bf r} \in {\mathcal S}$ have the same direction, i.e.
\begin{equation}
{\bf s} =c_* {\bf r} \quad \mbox{for some $c_* \in \R$},
\end{equation}
then $c_*$ is rational.

\end{description}

Let us discuss the meaning of these assumptions and argue that they are not very restrictive:

The condition {\bf A1} says that $\alpha$ has diophantine properties in the sense of not being  `too well' approximated by rationals. In fact, the special case $n_3=0$ of {\bf A1} implies that $\alpha$ is {\it not} a Liouville number, meaning that it has irrationality measure $\mu<\infty$  (defined as the smallest $\mu$ such that $|\alpha-p/q| > q^{-\mu-\varepsilon}$ for all $\varepsilon>0$ and integers $p$ and $q$ with $q$ sufficiently large). Note also that $\mu\geq 2$ for any irrational number $\alpha$.

We may thus interpret {\bf A1} as a {\it strong (or quadratic) non-Liouville property} of $\alpha$. In \cite{KS}, whose results we use here, it is used as a technical condition which allows to estimate the angle between two non-colinear vectors ${\bf s}_1+\alpha {\bf s}_2$ and ${\bf s}_1'+\alpha {\bf s}_2'$ (with ${\bf s}_1, {\bf s}_ 2, {\bf s}_1', {\bf s}_2' \in \Z^2$) from below by a negative power of $|{\bf s}_1+\alpha {\bf s}_2| +|{\bf s}_1'+\alpha {\bf s}_2'|$.

The condition {\bf A1} holds, in particular, for quadratic irrationals. To see this, consider a non-trivial triple $(n'_1, n'_2,n'_3)$ of integers such that
\begin{equation}
n'_1+\alpha
n'_2+\alpha^2 n'_3=0. \label{quadirr}
\end{equation}
This triple is unique up to trivial multiplication (otherwise $\alpha$ is rational). If $n_1+\alpha
n_2+\alpha^2 n_3\not=0$ for some other triple $(n_1, n_2,n_3)$, then $\mu=2$ (true for all algebraic irrationals) can be used to show that \eqref{condition} holds with $N_0=2+\varepsilon$.

However, the set of $\alpha$ satisfying {\bf A1} is substantially larger than the countable set of quadratic irrationals. In fact, this set has full Lebesgue measure in $(0,1)$, which can be seen by an elementary argument using the Borel-Cantelli Lemma (similar to how the same fact is proven for the set of non-Liouville numbers).

The condition {\bf A2} means that if there are several vectors in ${ \mathcal S}$
with the same direction, then they form a subset of a periodic
one-dimensional lattice. In \cite{KS} this is used as a technical condition in the analysis of certain one-dimensional operators associated with $H$. The condition {\bf A2}, just as {\bf A1}, can be considered as generically satisfied,  in the sense that it holds for typical finite symmetric subsets of $\R^2$, which don't have any vectors of the same direction other than what is needed for the symmetry requirement.

One way, but not the only way, in which {\bf A2} can be violated is for separable potentials. For example,
\begin{equation}
V(x_1,x_2) =\cos(2\pi x_1)+\cos(2\pi x_2)+\cos(2\pi \alpha x_1)+\cos(2\pi \alpha x_2)
\end{equation}
corresponds to ${\mathcal S} = \{(\pm 1,0), (0,\pm 1), (\pm \alpha, 0), (0, \pm \alpha)$, which does not satisfy {\bf A2}. Of course, separable potentials can be studied with the much more developed theory of one-dimensional quasi-periodic Schr\"odinger operators. We note that results on absolute continuity for 1D quasi-periodic operators, e.g.\ \cite{DS,El,JM,MP,R}, also typically require diophantine (non-Liouville) properties of the frequencies.

A basic example of a quasi-periodic potential satisfying all our assumptions and not being periodic in any direction is
\begin{equation}
V(x_1,x_2) =\cos(2\pi x_1)+\cos(2\pi x_2)+\cos(2\pi(\alpha x_1+x_2))+\cos(2\pi(x_1+\alpha x_2)),
\end{equation}
with $\alpha$ satisfying {\bf A1}. Here ${\mathcal S} = \{(\pm 1,0), (0,\pm 1), (\pm \alpha, \pm 1), (\pm 1, \pm \alpha)\}$, so {\bf A2} holds as well.

To conclude, we mention that \cite{KS} states a longer list of assumptions on the quasi-periodic potential $V$, some being consequences of the others, mostly to have these facts available for the proofs. The assumptions {\bf A1} and {\bf A2} are chosen minimal to imply all that is needed.

\subsection{The Main Result}

Now we consider \eqref{IVP}, $V$ being limit-periodic or quasi-periodic with assumptions as above. Clearly, the ballistic upper bound of \cite{RaSi} applies and we have \eqref{genballistic} for initial conditions $\Psi_0$ satisfying (\ref{RScond}).


As our main result we prove that under the above assumptions, both in the limit-periodic and quasi-periodic case, one also has corresponding {\it ballistic lower bounds} for a large class of initial conditions. For this we use the infinite-dimensional spectral projector $E_{\infty}$ for $H$ whose construction is described in Section~\ref{SpecPropH} below.

\begin{thm}\label{Thm1}
There is an infinite-dimensional projector $E_{\infty}$ in $L^2(\R^2)$ (described in Section~\ref{SpecPropH}) with the following property: For any
\begin{equation} \label{nontriv}
\Psi _0\in {\mathcal C_0^\infty} \quad \mbox{with} \quad E_{\infty }\Psi _0 \neq 0
\end{equation}
there are constants $c_1 = c_1(\Psi_0)>0$ and $T_0 = T_0(\Psi)$ such that the solution $\Psi (\vec x,t)$ of \eqref{IVP} satisfies the estimate
    \begin{equation}\label{ball-1}
    \frac{2}{T}\int _0^\infty e^{-2t/T} \big\|X \Psi (\cdot,t)\big \|^2_{L^2(\R^2)} dt >c_1 T^2
    \end{equation}
for all $T>T_0$.
\end{thm}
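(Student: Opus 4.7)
The plan is to exploit the explicit structure of the absolutely continuous subspace constructed in \cite{KL3,KS,KS1}. That work provides, for quasi-momenta $\vec k$ in an (extended) Cantor-like set $\mathcal{G}_\infty \subset \R^2$ concentrated in the high-energy regime, a family of generalized quasi-Bloch eigenfunctions $\Psi_\infty(\vec k,\vec x) = e^{i\langle \vec k,\vec x\rangle}(1 + u(\vec k,\vec x))$ of $H$ with dispersion $\lambda_\infty(\vec k)$; the correction $u$ is small in a suitable sense and $\lambda_\infty$ admits $C^\infty$ extensions to open thickenings of $\mathcal{G}_\infty$ satisfying $\nabla_{\vec k}\lambda_\infty(\vec k) = 2\vec k + O(|\vec k|^{-\gamma})$ for some $\gamma>0$. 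The projector $E_\infty$ is the spectral-type projector onto the closure of the span of $\{\Psi_\infty(\vec k,\cdot):\vec k\in\mathcal{G}_\infty\}$; it commutes with $H$, and setting $\Phi_0 := E_\infty \Psi_0$ (with generalized Fourier transform $\widehat{\Phi_0}(\vec k)$, not identically zero by hypothesis) the restriction of the evolution to $\Ran(E_\infty)$ has the representation
\[
(e^{-iHt}\Phi_0)(\vec x) = \int_{\mathcal{G}_\infty}\widehat{\Phi_0}(\vec k)\, e^{-i\lambda_\infty(\vec k)t}\, \Psi_\infty(\vec k,\vec x)\, d\vec k.
\]

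Next I compute $Xe^{-iHt}\Phi_0$ via integration by parts in $\vec k$: inserting $\vec x$ under the integral and using $\vec x\, e^{i\langle \vec k,\vec x\rangle} = -i\nabla_{\vec k}e^{i\langle \vec k,\vec x\rangle}$, the derivative, once transferred to the smooth amplitude, falls predominantly on the time-dependent phase $e^{-i\lambda_\infty(\vec k)t}$, producing the factor $-it\,\nabla_{\vec k}\lambda_\infty(\vec k) \approx -2it\vec k$. Applying Parseval with respect to the generalized basis on $\Ran(E_\infty)$ then yields
\[
\|X e^{-iHt}\Phi_0\|_{L^2}^2 \;=\; 4t^2 \int_{\mathcal{G}_\infty} |\vec k|^2\, |\widehat{\Phi_0}(\vec k)|^2\, d\vec k \;+\; O(t),
\]
with strictly positive leading coefficient, since $\mathcal{G}_\infty$ is concentrated away from $\vec k = 0$. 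Using the elementary identities $\frac{2}{T}\int_0^\infty e^{-2t/T}t^2\,dt = T^2/2$ and $\frac{2}{T}\int_0^\infty e^{-2t/T}t\,dt = T/2$, the Abel mean of $\|X e^{-iHt}\Phi_0\|_{L^2}^2$ is bounded below by $\widetilde{c}_1 T^2$ for $T$ large. To pass from this bound on the $E_\infty$-component to \eqref{ball-1}, I would write $\|X\Psi(\cdot,t)\|_{L^2} \ge \|XE_\infty\Psi(\cdot,t)\|_{L^2} - \|[X,E_\infty]\Psi(\cdot,t)\|_{L^2}$ and use that the explicit kernel of $E_\infty$ together with $\Psi_0 \in C_0^\infty$ keeps $\|[X,E_\infty]\Psi(\cdot,t)\|_{L^2}$ uniformly bounded in $t$, contributing only $O(1)$ to the Abel mean.

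The main obstacle is justifying the integration by parts on $\mathcal{G}_\infty$, which is a Cantor-like set with empty interior. The strategy is to work on the open $C^\infty$-thickenings $\widetilde{\mathcal{G}}_n \supset \mathcal{G}_\infty$ produced in the construction of \cite{KS,KS1} --- on which $\lambda_\infty$ and $\Psi_\infty$ admit smooth extensions with uniform derivative bounds --- and to insert adapted $C_0^\infty$ cutoffs supported in these thickenings before integrating by parts; the boundary terms then vanish, while the cutoff derivatives contribute errors that can be absorbed using the measure-theoretic control on $\widetilde{\mathcal{G}}_n \setminus \mathcal{G}_\infty$ established in \cite{KL3,KS,KS1}. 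Secondarily, the ``amplitude corrections'' --- those arising when the $\vec k$-derivative falls on $\widehat{\Phi_0}(\vec k)(1+u(\vec k,\vec x))$ rather than on the phase --- must be shown to be subleading; the non-stationary oscillatory integrals they produce decay in $t$ by stationary phase, contributing only $O(T)$ to the Abel mean and hence being dominated by the principal $T^2$ term.
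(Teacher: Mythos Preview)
Your outline for the $E_\infty$-component --- thickening $\mathcal{G}_\infty$ with smooth cutoffs, integrating by parts in $\vec k$ to extract the $t\,\nabla\lambda_\infty(\vec k)$ factor, and controlling remainders by stationary phase --- is exactly the mechanism the paper uses to prove Proposition~\ref{Prop2}. Two cautions here: the ``Parseval'' step is not an equality but a two-sided bound (the paper uses $\|S_\infty-S_0\|$ small and ordinary Plancherel for $S_0$), and in fact $X e^{-iHt}\Phi_0$ need not lie in $L^2(\R^2)$ at all, since $\Phi_0=E_\infty\Psi_0$ has a sharp cutoff to the Cantor set $\mathcal{G}_\infty$ in momentum and hence no $H^1$-regularity there. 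The paper handles both issues by localizing everything to a ball $B_R$ with $R=c_0T$ before integrating by parts; you should do the same.

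The genuine gap is your passage from $\Phi_0=E_\infty\Psi_0$ to $\Psi_0$ via the commutator bound $\|[X,E_\infty]\Psi(\cdot,t)\|=O(1)$. This fails for the same reason: $E_\infty$ is, up to a small correction, $\mathcal{F}^*\chi_{\mathcal{G}_\infty}\mathcal{F}$, and the commutator of position with a sharp momentum cutoff to a set with fractal boundary is not a bounded operator. Invoking $\Psi_0\in C_0^\infty$ does not help, because you must apply $[X,E_\infty]$ to $\Psi(\cdot,t)=e^{-iHt}\Psi_0$, which is no longer compactly supported; and since $[X,E_\infty]$ does not commute with $e^{-iHt}$, you cannot transfer the regularity. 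Both $XE_\infty\Psi(\cdot,t)$ and $E_\infty X\Psi(\cdot,t)$ individually grow like $t$, and no mechanism in your sketch forces the cancellation needed for an $O(1)$ difference.

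The paper's route around this is more delicate. Writing $\Psi_0=\Psi_{0,\mathrm{ac}}+\Psi_{0,\mathrm{s}}$ with $\Psi_{0,\mathrm{s}}\perp\Ran(E_\infty)$, one works in $B_R$ and must control the cross term $(\Psi_{\mathrm{s}},X^2w)_{L^2(B_R)}$. After integrating by parts twice in $\vec k$ (with the smooth cutoff $\eta_\delta$), the leading contribution to $X^2w$ is $t^2 J_1$ with $J_1\in\Ran(E_\infty)$; hence $(\Psi_{\mathrm{s}},J_1)_{L^2(\R^2)}=0$, so $(\Psi_{\mathrm{s}},J_1)_{L^2(B_R)}=-(\Psi_{\mathrm{s}},J_1)_{L^2(\R^2\setminus B_R)}$. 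This exterior integral is then small by the same stationary-phase estimate you already use for $R_2$, after choosing $c_0$ large. It is this orthogonality-plus-localization trick, not a commutator bound, that makes the cross term harmless.
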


In Section 2 we show that $E_{\infty}$  is close in norm to ${\mathcal F}^*\chi \left(\mathcal{G}_{\infty}\right){\mathcal F}$, where ${\mathcal F}$ is the Fourier transform and $\chi \left(\mathcal{G}_{\infty}\right) $ the characteristic function of a set $\mathcal{G}_{\infty}$, which has asymptotically full measure in $\R^2$, see (\ref{full}) and (\ref{March21a}).

As already remarked in Section~\ref{prior}, due to the validity of the ballistic upper bound (\ref{genballistic}) for all initial conditions $\Psi_0 \in C_0^{\infty} \subset \mathcal{S}_1$, Theorem~\ref{Thm1} remains true if the Abel means are replaced by Cesaro means.

Theorem~\ref{Thm1} will be proven in two steps. First we will show

\begin{prop}\label{Prop2}
For $E_\infty$ as above, if $\Psi _0\in E_{\infty }\mathcal C_0^\infty$, $\Psi _0\neq 0$,
then the solution $\Psi (\vec x,t)$ of \eqref{IVP} satisfies the ballistic lower bound (\ref{ball-1}).
\end{prop}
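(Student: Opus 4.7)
The plan is to exploit the generalized eigenfunction expansion that underlies the construction of $E_{\infty}$. By the results from \cite{KL3,KS,KS1} that are summarized in Section~\ref{SpecPropH}, there are measurable families $\lambda(\vec k)$ and $\Phi(\vec k,\vec x)$ parametrized by $\vec k\in\mathcal{G}_{\infty}$ such that $H\Phi(\vec k,\cdot)=\lambda(\vec k)\Phi(\vec k,\cdot)$, with $\lambda(\vec k) = |\vec k|^2(1+o(1))$, $\nabla_{\vec k}\lambda(\vec k) = 2\vec k + o(|\vec k|)$, and $\Phi(\vec k,\vec x)=e^{i\la\vec k,\vec x\ra}u(\vec k,\vec x)$ with $u$ close to $1$ and smooth in $\vec k$. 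Since $\Psi_0\in E_{\infty}\mathcal{C}_0^\infty$ is nonzero, the map $\Psi_0\mapsto\hat g$ onto its coefficients $\hat g\in L^2(\mathcal{G}_{\infty})$ gives a nontrivial element with the representation
\begin{equation}\label{expansion}
\Psi(\vec x,t) \;=\; \int_{\mathcal{G}_{\infty}} \hat g(\vec k)\, e^{-i\lambda(\vec k)t}\Phi(\vec k,\vec x)\,d\vec k.
\end{equation}
The map $\hat g\mapsto\Psi(\cdot,t)$ is an isometry (or close to one) from $L^2(\mathcal{G}_{\infty})$ onto $E_{\infty}L^2(\R^2)$, which will let us transfer bounds from momentum space to configuration space.

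Next I would compute $X\Psi(\cdot,t)$ using the key identity $\vec x\, e^{i\la\vec k,\vec x\ra}=-i\nabla_{\vec k}e^{i\la\vec k,\vec x\ra}$, which applied to $\Phi=e^{i\la\vec k,\vec x\ra}u$ yields
\begin{equation*}
\vec x\,\Phi(\vec k,\vec x) \;=\; -i\nabla_{\vec k}\Phi(\vec k,\vec x) \;+\; i e^{i\la\vec k,\vec x\ra}\nabla_{\vec k}u(\vec k,\vec x).
\end{equation*}
Inserting this into \eqref{expansion} and integrating by parts in $\vec k$ (modulo the boundary issue discussed below), the derivative falls onto $\hat g(\vec k)\,e^{-i\lambda(\vec k)t}$ and produces a factor $t\,\nabla\lambda(\vec k)$ from the exponential, so that
\begin{equation}\label{principal}
\vec x\,\Psi(\vec x,t) \;=\; t\int_{\mathcal{G}_{\infty}}\hat g(\vec k)\,\nabla\lambda(\vec k)\,e^{-i\lambda(\vec k)t}\Phi(\vec k,\vec x)\,d\vec k \;+\; R(\vec x,t),
\end{equation}
where $R$ collects three kinds of terms, each $t$-independent or oscillatory: $-i$ times the expansion of $\nabla\hat g$, the contribution of $\nabla_{\vec k}u$, and boundary contributions from the integration by parts. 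Applying the near-isometry to the principal term in \eqref{principal} gives
\begin{equation*}
\|\vec x\,\Psi(\cdot,t)\|_{L^2}^{2} \;\geq\; c\, t^{2}\!\int_{\mathcal{G}_{\infty}}|\nabla\lambda(\vec k)|^{2}|\hat g(\vec k)|^{2}\,d\vec k \;-\; 2\|R\|_{L^2}\cdot t\,\|\nabla\lambda\,\hat g\|_{L^2}\cdot C,
\end{equation*}
and since $|\nabla\lambda(\vec k)|\gtrsim|\vec k|$ is bounded below uniformly on $\mathcal{G}_{\infty}$ (the construction places $\mathcal{G}_{\infty}$ in the large-momentum regime), while $\hat g\not\equiv 0$, the first term grows like $t^{2}$. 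Averaging against $\frac{2}{T}e^{-2t/T}$ then yields \eqref{ball-1}.

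The main obstacle is the integration by parts on the Cantor-like set $\mathcal{G}_{\infty}$, which has no smooth boundary. The strategy is to replace $\hat g$ by a compactly supported smooth extension $\tilde g$ defined on an open neighborhood $\widetilde{\mathcal{G}}_{\infty}\supset\mathcal{G}_{\infty}$ (the ``appropriate extension'' mentioned in the abstract) to which the family $\lambda(\vec k),\Phi(\vec k,\cdot)$ admits a sufficiently regular extension, so that genuine boundary terms disappear. The discrepancy is then split into (i) the contribution coming from $\widetilde{\mathcal{G}}_{\infty}\setminus\mathcal{G}_{\infty}$, which is controlled by showing that the extended wave packet differs from the true solution by a term that is stationary-phase negligible (the phase $\la\vec k,\vec x\ra-\lambda(\vec k)t$ has no critical points for the relevant range of $\vec x/t$ when $\vec k$ is outside $\mathcal{G}_{\infty}$, and repeated non-stationary-phase integration by parts gives arbitrary polynomial decay in $t$), and (ii) the error from replacing $\hat g$ by $\tilde g$, which is small because the near-isometry property tells us the $L^2$-norms of the associated wave packets are close. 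The remaining lower-order terms in $R$ (those involving $\nabla\hat g$ and $\nabla_{\vec k}u$) are handled by the same machinery and contribute at most $O(t)$ to $\|X\Psi\|_{L^2}$, which is subleading relative to the $t^{2}$ main term.
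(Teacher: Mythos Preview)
Your broad strategy---eigenfunction expansion, the identity $\vec x\,e^{i\langle\vec k,\vec x\rangle}=-i\nabla_{\vec k}e^{i\langle\vec k,\vec x\rangle}$, integration by parts to extract the leading $t\nabla\lambda_\infty$ term, and smooth extension to an open neighborhood $\widetilde{\mathcal G}_\infty$ to legitimize the integration by parts---matches the paper's. The gap is in how you control the extension error, and in where stationary phase actually enters.

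Your claim (i), that the contribution from $\widetilde{\mathcal G}_\infty\setminus\mathcal G_\infty$ is ``stationary-phase negligible'' because ``the phase $\langle\vec k,\vec x\rangle-\lambda(\vec k)t$ has no critical points for the relevant range of $\vec x/t$ when $\vec k$ is outside $\mathcal G_\infty$,'' does not hold. The extended $\lambda_\infty$ satisfies $\nabla\lambda_\infty(\vec k)\approx 2\vec k$ everywhere, so for \emph{each} $\vec z=\vec x/t$ the stationary point $\vec k_0\approx\vec z/2$ can lie in $\widetilde{\mathcal G}_\infty\setminus\mathcal G_\infty$; there is no mechanism that forces the critical point into $\mathcal G_\infty$. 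Consequently the difference between the extended packet and $\Psi$ does \emph{not} decay in $t$ in $L^2(\R^2)$, and certainly not after multiplication by the unbounded weight $|\vec x|$. The paper avoids this entirely: it first restricts to a ball $B_R$ with $R=c_0T$, so that $\|X\|\le c_0T$ there, and then bounds $\|X(\Psi-w)\|_{L^2(B_R)}^2\le c_0^2T^2\|\Psi-w\|_{L^2}^2$; the last factor is $o(1)$ as the neighborhood shrinks, by a pure measure/dominated-convergence argument---no stationary phase at all.

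Stationary phase is used at a different place, which your outline omits: once the principal term $t\phi$ (your $t\int\hat g\,\nabla\lambda\,e^{-i\lambda t}\Phi$) is isolated, one must show its $L^2$-mass is essentially captured by $B_R$, i.e.\ that $\|\phi\|_{L^2(\R^2\setminus B_R)}^2$ contributes only $o(T^2)$ after time-averaging. This is where the paper invokes the stationary-phase expansion (Theorem~7.7.5 of H\"ormander) together with non-stationary integration by parts away from $\vec k_0$, and then chooses $c_0$ large. Without this step, restricting to $B_R$ would cost you the main lower bound. So the two-step structure is: (a) restrict to $B_R$ to tame $X$ on the extension error, then (b) use stationary phase to show the leading term already lives in $B_R$. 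Your proposal collapses these and places the oscillatory-integral argument on the wrong piece.
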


Note that Proposition~\ref{Prop2} differs from Theorem~\ref{Thm1} by the fact that the initial condition $\Psi_0$ for which the ballistic lower bound is concluded is in the image of $C_0^{\infty}$ under the projection $E_{\infty}$ (but that $\Psi_0$ itself is not in $C_0^\infty$ here). This proposition takes the role of our core technical result, i.e.\ most of the technical work towards proving Theorem~\ref{Thm1} will go into the proof of the proposition. Theorem~\ref{Thm1} gives a more explicit description of initial conditions for which ballistic transport can be established. In fact, one easily combines Theorem~\ref{Thm1} with the ballistic upper bound (\ref{genballistic}) to get ballistic transport in form of a two-sided bound for many initial conditions:

\begin{cor} \label{Cor}
There is an $L^2$-dense and relatively open subset $\mathcal D$ of $C_0^\infty$ such that for every $\Psi_0 \in {\mathcal D}$ there are constants $0<c_1 \le C_1 <\infty$ such that the ballistic upper bound (\ref{genballistic}) and the ballistic lower bound (\ref{ball-1}) hold.
\end{cor}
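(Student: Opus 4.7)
\smallskip
\noindent\textbf{Proof proposal for Corollary \ref{Cor}.}
The plan is to take
\[
\mathcal{D} := \{\Psi_0 \in C_0^\infty : E_\infty \Psi_0 \neq 0 \}
\]
and verify the three required properties: (i) every element of $\mathcal{D}$ satisfies both bounds; (ii) $\mathcal{D}$ is relatively open in $C_0^\infty$ with respect to the $L^2$-topology; (iii) $\mathcal{D}$ is $L^2$-dense in $C_0^\infty$. Property (i) is immediate: the upper bound (\ref{genballistic}) holds for every $\Psi_0 \in C_0^\infty$ by the Radin--Simon result \cite{RaSi} (since $C_0^\infty \subset \mathcal{S}_1$), and the lower bound (\ref{ball-1}) holds for every $\Psi_0 \in \mathcal{D}$ by Theorem~\ref{Thm1}.

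For (ii), since $E_\infty$ is an orthogonal projection in $L^2(\R^2)$, the map $\Psi_0 \mapsto \|E_\infty \Psi_0\|_{L^2}$ is $1$-Lipschitz on $L^2$. Thus if $\Psi_0 \in \mathcal{D}$ and $\Psi_0' \in C_0^\infty$ satisfies $\|\Psi_0' - \Psi_0\|_{L^2} < \|E_\infty \Psi_0\|_{L^2}$, then
\[
\|E_\infty \Psi_0'\|_{L^2} \geq \|E_\infty \Psi_0\|_{L^2} - \|\Psi_0' - \Psi_0\|_{L^2} > 0,
\]
so $\Psi_0' \in \mathcal{D}$. Hence $\mathcal{D}$ is relatively open in $C_0^\infty$.

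For (iii), we first produce one fixed $\phi \in C_0^\infty$ with $E_\infty \phi \neq 0$. Since $E_\infty$ is infinite-dimensional it is, in particular, not the zero projection, so there exists $\psi \in L^2(\R^2)$ with $E_\infty \psi \neq 0$. Choose $\phi_n \in C_0^\infty$ with $\phi_n \to \psi$ in $L^2$; then $E_\infty \phi_n \to E_\infty \psi \neq 0$ by continuity of $E_\infty$, so $E_\infty \phi_n \neq 0$ for all large $n$, and we fix any such $\phi := \phi_n$. Now, given an arbitrary $\Psi_0 \in C_0^\infty$ and $\varepsilon > 0$, either $\Psi_0 \in \mathcal{D}$ (and we are done), or $E_\infty \Psi_0 = 0$, in which case
\[
\Psi_0^\varepsilon := \Psi_0 + \frac{\varepsilon}{\|\phi\|_{L^2}}\,\phi \in C_0^\infty,
\qquad
\|\Psi_0^\varepsilon - \Psi_0\|_{L^2} = \varepsilon,
\qquad
E_\infty \Psi_0^\varepsilon = \frac{\varepsilon}{\|\phi\|_{L^2}}\, E_\infty \phi \neq 0,
\]
so $\Psi_0^\varepsilon \in \mathcal{D}$. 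This proves density.

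The content of this corollary is essentially organizational rather than technical: all of the genuine analytic work is hidden in Theorem~\ref{Thm1} and in the construction of the projector $E_\infty$ with the properties described in Section~\ref{SpecPropH}. The only point that requires care is the existence of at least one test function $\phi \in C_0^\infty$ on which $E_\infty$ acts nontrivially; this is where the infinite-dimensionality (in fact, even non-triviality) of $E_\infty$ is used, together with density of $C_0^\infty$ in $L^2$ and continuity of $E_\infty$.
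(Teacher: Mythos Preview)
Your proof is correct and follows exactly the approach the paper indicates: the paper's own proof is the single sentence ``This follows by an elementary argument using only that $E_{\infty}$ is not the zero projection and $C_0^{\infty}$ is dense in $L^2$ (and that $C_0^{\infty}$ functions also satisfy (\ref{RScond})),'' and you have simply written out that elementary argument in full detail.
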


This follows by an elementary argument using only that $E_{\infty}$ is not the zero projection and $C_0^{\infty}$ is dense in $L^2$ (and that $C_0^{\infty}$ functions also satisfy (\ref{RScond})).

It is certainly desirable to go beyond this corollary and to more explicitly characterize classes of initial conditions for which (\ref{nontriv}) holds. This requires to much better describe and exploit the nature of the projector $E_{\infty}$.  While we believe that $E_{\infty }\Psi _0 \neq 0$ for any $0\not=\Psi _0\in {\mathcal C_0^\infty}$, we do not have a proof of this. We will return to this question at the end of Section~\ref{Thm1proof}, see Remark~\ref{smoothEFE}, where we will more explicitly construct initial conditions which lead to both upper and lower ballistic transport bounds. These will have the form of suitably regularized generalized eigenfunction expansions.

The proofs of Theorem~\ref{Thm1} for the limit-periodic case \eqref{V} and the quasi-periodic case \eqref{Vq} are analogous, using results from \cite{KL3} and \cite{KS,KS1}, respectively. For the sake of definiteness, we  present all the details for the limit-periodic case only. Where necessary, we will comment on the quasi-periodic setting. Also, some statements in the paper \cite{KL3} on the limit-period case were presented in a form inconvenient for what we need here and thus need some technical adjustments, while the corresponding results for the quasi-periodic  case in \cite{KS1} are more directly applicable. This provides another reason for mostly focusing on the limit-periodic case here.

In Section~\ref{SpecPropH} we start by recalling results on the spectral properties of the operator $H$ which were obtained in \cite{KL3} for the limit-periodic case and in \cite{KS1} for the quasi-periodic case. Some of these results will also be adjusted and refined to make them more suitable for the proof of our main result.

Proposition~\ref{Prop2} will be proven in Section~\ref{Prop2proof}, followed by the proof of Theorem~\ref{Thm1} in Section~\ref{Thm1proof}. We conclude Section~\ref{Thm1proof} with a discussion of how to weaken the $C_0^\infty$ assumption in our results, arguing that it is enough to require a sufficient amount of smoothness and decay.  This will also shed some more light on the above question of describing initial conditions which lead to both {\it upper and lower} ballistic bounds.

In Section~\ref{appendices} we collect several appendices which provide technical details for some of the arguments from earlier sections.

\subsection{Comments on techniques and limitations}

To motivate the technical constructions used in this work, we conclude this introduction by a discussion of some of the main features of the methods used here and in the underlying works \cite{KL3,KS,KS1}. This will also shed light on the limitations of what can be done with our approach.

We start by recalling that the Fourier transform yields an explicit spectral resolution $E_{\lambda}$, $\lambda \in \R$ for $H_0=-\Delta$, given by
\begin{equation} E_{\lambda} F = \frac{1}{4\pi^2} \int_{\mathcal{G}_\lambda} (F, \Psi(\vec k,\cdot)) \Psi(\vec k, \cdot)\,d\vec k,\label{March21}
\end{equation}
where, say, $F$ is continuous and compactly supported, $(\cdot,\cdot)$ is the standard $L^2$ scalar product, ${\mathcal G}_\lambda:=\{\vec k\in \R^2: |\vec k|^2 \le \lambda\}$, and $\Psi(\vec k, \vec x) = e^{i \langle \vec k, \vec x \rangle}$ are the plane waves, i.e.\ generalized eigenfunctions with $H_0 \psi(\vec k,\cdot) = |\vec k|^2 \psi(\vec k, \cdot)$.

The central result in the works \cite{KL3,KS,KS1} is to find a spectral sub-resolution for the limit-periodic and quasi-periodic operator $H$, under the assumptions given above, which at high energy is asymptotically close to the Fourier transform. It takes the form
\[ E_{\infty,\lambda} F = \frac{1}{4\pi^2} \int_{\mathcal{G}_{\infty,\lambda}} (F, \Psi_{\infty}(\vec k,\cdot)) \Psi_{\infty}(\vec k,\cdot)\,d\vec k,\]
where $\Psi_{\infty}(\vec k, \cdot)$ are generalized eigenfunctions of $H$ close to the plane waves, with generalized eigenvalues $\lambda_{\infty}(\vec k)$ close to $|\vec k|^2$, as specified in (\ref{aplane}) to (\ref{16a}) below. The momentum integration is restricted to ${\mathcal G}_{\infty,\lambda} = \{ \vec k \in {\mathcal G}_{\infty}: \lambda_{\infty}(\vec k) \le \lambda\}$. Here the set ${\mathcal G}_{\infty}$ is a subset of $\R^2$ of asymptotically full measure, see (\ref{full}).  It follows (Section 2) that $E_{\infty,\lambda}$ is an operator close in norm to (\ref{March21}) with $\mathcal{G}_{\infty,\lambda}$ instead of just $\mathcal{G}_\lambda$. In other words, $E_{\infty,\lambda}$ is close to ${\mathcal F}^*\chi \left(\mathcal{G}_{\infty,\lambda}\right){\mathcal F}$, where $\chi \left(\mathcal{G}_{\infty,\lambda}\right)$ is the characterictic function of $\mathcal{G}_{\infty,\lambda}$. It follows that the operator  $E_{\infty}$ in Theorem 1.1 is close to ${\mathcal F}^*\chi \left(\mathcal{G}_{\infty}\right){\mathcal F}$.

By construction $\mathcal{G}_{\infty,\lambda}$ is a Cantor-type set: One considers a sequence of approximants $H_n$ of $H$ and, iteratively for each approximant, has to remove sets of resonant quasi-momenta from $\R^2$. The resonant sets are due to resonant eigenvalue splittings for $H_n$, regions where the perturbation theoretic arguments of \cite{KL3,KS,KS1} fail to work.

In the limit-periodic case one can choose $H_n = -\Delta +V_n$ with periodic approximants $V_n$ of $V$, see (\ref{W_n}). A substantially different approach is needed for the quasi-periodic case, where the approximants arise as part of a multiscale analysis procedure which is carried out by working in momentum space, see \cite{KS}. In particular, the $H_n$ are {\it not} found via periodic approximants $V_n$ of $V$ (e.g.\ via the continued fraction approximations $\alpha_n$ of $\alpha$), as the latter do not converge uniformly in the space variable. In fact, the condition {\bf A1} means that $\alpha$ is not well approximated by rationals. As noted above, this situation is familiar from results for the 1D case.

The $E_{\infty,\lambda}$ converge strongly as $\lambda\to\infty$ to the orthogonal projection $E_{\infty}$ appearing in our main results. The main difficulty in analyzing the corresponding branch of the spectrum of $H$ lies in the Cantor structure of ${\mathcal G}_{\infty}$. Arguments involving integration by parts and stationary phase techniques have to be carefully justified. The works \cite{KL3,KS,KS1} succeed in establishing absolute continuity of the corresponding spectral measures. However, more work and further analysis is required to obtain the ballistic transport bounds which are our goal here. In particular, this will require that in Section~\ref{SpecPropH} we recall the details of the construction of the functions $\Psi_{\infty}$, $\lambda_{\infty}$ and the set ${\mathcal G}_{\infty}$ from earlier works. A new tool to be exploited here is that the functions $\lambda_{\infty}(\vec k)$ and $\Psi_{\infty}(\vec k, \vec x)$ can be extended smoothly from $\vec k \in {\mathcal G}_{\infty}$ to $\vec k \in \R^2$ (the extensions are not eigenvalues and eigenfunctions anymore), see Sections~\ref{extlambda} and \ref{extpsi}.

A limitation of our perturbation theoretic approach is that it does not exclude the possibility of singular spectrum imbedded in the absolutely continuous spectrum at high energy. Also, our methods don't apply at energies near the bottom of the spectrum where Anderson localization might be expected (which should require methods similar to those for the lattice case discussed in the next paragraph). We see a chance that our methods can be extended to dimensions $d>2$, but this will be technically more difficult, as the number of resonant energies grows with the dimension. Another open problem in the multidimensional setting is to understand and analyze possible anomalous transport, i.e.\ situations with diffusion exponent $\beta$ strictly between 0 and 1 (where the dependence of $\beta$ on the moment $m$ in (\ref{moments}) becomes more significant). Finally, there still do not seem to be any mathematical methods to characterize the spectral type of self-similar potentials which model important physical examples of multidimensional {\it quasi-crystals} (while examples such as the ones considered here have been referred to as {\it modulated crystals}).

While the theory of one-dimensional limit and quasi-periodic potentials is very highly developed after more than three decades of research, the limitations mentioned above show that completing a similar program in the multi-dimensional case still seems far from reach. Our work is, in some sense, complementary to the results obtained in \cite{ChD, BGS, B2007}. These works establish Anderson localization for the multidimensional discrete Schr\"odinger equation with suitable quasi-periodic potential at {\it large coupling} (for dimension $d\ge 2$ in \cite{ChD} and, for more general quasi-periodic dynamics, in \cite{BGS} for $d=2$ and \cite{B2007} for $d\ge 2$). The high energy regime for continuum Schr\"odinger operators which we consider here has no direct analogue in the lattice case. One might expect that methods similar to ours can be used to show the existence of a.c.\ spectrum and ballistic transport for discrete 2D limit or quasi-periodic Schr\"odinger operators at {\it small coupling}, but we have not verified this. Observing that the results on Anderson localization in \cite{BGS,B2007} use a multi-scale analysis approach in position space, while our result is based on an MSA technique in momentum space, it is tempting to think that a multidimensional analogue of Aubry duality is emerging.


\section{Spectral Properties of the Operator $H$} \label{SpecPropH}

Our proofs of Proposition~\ref{Prop2} and Theorem~\ref{Thm1} are based on the results and properties of two-dimensional limit and quasi-periodic Schr\"odinger operators derived in the papers \cite{KL3} and \cite{KS1}. While those works derived, in particular, the existence of an absolutely continuous component of the spectrum, we will show here how the bounds obtained can be used and, in part, improved, to also conclude ballistic transport. In this section we give a thorough discussion of the results and methods from \cite{KL3} and \cite{KS1}, mostly focusing on the limit-periodic case. In particular, we give a detailed construction of the spectral projection $E_{\infty}$ used in our main results.

\subsection{The Case of a Limit-Periodic Potential}

\subsubsection{Prior results}

To describe $E_{\infty}$, we recall
 the spectral properties of $H$, obtained in \cite{KL3}:
    \begin{enumerate}
    \item The spectrum of the operator (\ref{limper}), (\ref{V})
    contains a semiaxis. A proof of an analogous result by different means can be found
    in the  paper \cite{SkSo}. In \cite{SkSo}, the authors consider the
    operator $H=(-\Delta)^l+V$, $8l>d+3$, $d\neq 1(\mbox{mod}\; 4)$.
    This obviously includes our case $l=1$, $d=2$.
    However, there is an additional rather strong restriction on the potential
    $V(\vec x)$ in \cite{SkSo}, which we don't have here: In \cite{SkSo} all the period lattices of the potentials $V_r$ need to have a nonzero vector $\gamma$ in common,
    i.e., $V(\vec x)$ is periodic in direction $\gamma $.

    \item There are generalized eigenfunctions $\Psi_{\infty }(\vec k, \vec x)$,
    corresponding to the semiaxis, which are    close to plane waves:
    For every $\vec k $ in an extensive subset $\mathcal{G} _{\infty }$ of $\R^2$ (in the sense of (\ref{full}) below) there is
    a solution $\Psi_{\infty }(\vec k, \vec x)$ of the  equation
    $H\Psi _{\infty }=\lambda _{\infty }\Psi _{\infty }$ which can be described by
    the formula:
        \begin{equation} \label{aplane}
        \Psi_{\infty }(\vec k, \vec x)
        =e^{i\langle \vec k, \vec x \rangle}\left(1+u_{\infty}(\vec k, \vec x)\right),
        \end{equation}
        \begin{equation}\label{aplane1}
        \|u_{\infty}\|_{L^\infty(\R^2)}=_{|\vec k| \to \infty}O(|\vec k|^{-\gamma _1}),\ \ \ \gamma _1>0,
        \end{equation}
    where $u_{\infty}(\vec k, \vec x)$ is a limit-periodic function, as the potential.
    The  eigenvalue $\lambda _{\infty }(\vec k)$ corresponding to
    $\Psi_{\infty }(\vec k, \vec x)$ is close to $|\vec k|^{2}$:
        \begin{equation}
        \lambda _{\infty }(\vec k)=_{|\vec k| \to \infty}|\vec k|^{2}+
        O(|\vec k|^{-\gamma _2}),\ \ \ \gamma _2>0. \label{16a}
        \end{equation}
    The ``non-resonant" set $\mathcal{G} _{\infty }$ of the vectors $\vec k$, for which (\ref{aplane}) to (\ref{16a}) hold, is an extensive Cantor type set:
    $\mathcal{G} _{\infty }=\cap _{n=1}^{\infty }\mathcal{G}_n$,
    where $\{\mathcal{G} _n\}_{n=1}^{\infty}$ is a decreasing sequence of sets in $\R^2$. Each $\mathcal{G} _n$ has a finite number of holes in each bounded region. More and more holes appears when $n$ increases, however holes added at each step are of smaller and smaller size.
    The set $\mathcal{G} _{\infty }$ satisfies the estimate:
       \begin{equation}\label{full}
       \frac{\left|\left(\mathcal{G} _{\infty}\cap B_R\right)\right|}{\left| B_R\right|}
       =_{ R\to \infty }1+O( R^{-\gamma _3}),\quad \gamma_3 >0,
       \end{equation}
    where $B_R$ is the disk of radius $R$ centered at the origin, $|\cdot |$ is the Lebesgue measure in $\R^2$.

    \item The set $\mathcal{D}_{\infty}(\lambda)$, defined as a level (isoenergetic) set for
    $\lambda _{\infty }(\vec k)$,
    $$ {\mathcal D} _{\infty}(\lambda)=\left\{ \vec k \in \mathcal{G} _{\infty } :\lambda _{\infty }(\vec k)=\lambda\right\},$$ is proven to be a slightly distorted circle with an
    infinite number of holes. It can be described by  the formula:
        \begin{equation} \label{D}
        {\mathcal D}_{\infty}(\lambda)=\{\vec k:\vec k=\varkappa _{\infty}(\lambda, \vec{\nu})\vec{\nu},
        \ \vec{\nu} \in {\mathcal B}_{\infty}(\lambda)\},
        \end{equation}
    where ${\mathcal B}_{\infty }(\lambda )$ is a subset of the unit
    circle $S_1$. The set ${\mathcal B}_{\infty }(\lambda )$ can be
    interpreted as the set of possible directions of propagation for the
    almost plane waves (\ref{aplane}).  The set ${\mathcal B}_{\infty
    }(\lambda )$ has a Cantor type structure and an asymptotically full
    measure on $S_1$ as $\lambda \to \infty $:
        \begin{equation}\label{B}
        L\left({\mathcal B}_{\infty}(\lambda )\right)
        =_{\lambda \to \infty}2\pi +O\left(\lambda^{-\gamma _3/2}\right),
        \end{equation}
    here and below $L(\cdot)$ is Lebesgue measure on $S_1$.
    The value $\varkappa _{\infty }(\lambda ,\vec \nu )$ in (\ref{D}) is
    the ``radius" of ${\mathcal D}_{\infty}(\lambda)$ in a direction
    $\vec \nu $. The function $\varkappa _{\infty }(\lambda ,\vec \nu)-\lambda^{1/2}$ describes the deviation of ${\mathcal D}_{\infty}(\lambda)$ from the perfect circle of radius
    $\lambda^{1/2}$. It is proven that the deviation is asymptotically
    small, uniformly in $\vec \nu \in {\mathcal B}_{\infty}(\lambda)$:
        \begin{equation}\label{h}
        \varkappa _{\infty }(\lambda ,\vec \nu)
        =_{\lambda \to \infty} \lambda^{1/2}+O\left(\lambda^{-\gamma _4}\right), \ \ \ \gamma _4>0.
        \end{equation}

    \item Absolute continuity of the branch of the spectrum (the semiaxis) corresponding to $\Psi_{\infty }(\vec k, \vec x)$ is proven, see details below.

\end{enumerate}

\subsubsection{Description of methods:} \label{methods} To prove the above results  in \cite{KL3}, the authors considered the sequence of operators:
    $$ H_0=-\Delta , \ \ \ \ \ \
H_n=H_0+\sum_{r=1}^{M_n} V_r,\ \ \ n\geq 1, \ M_n \to \infty
\mbox{ as } n \to \infty .$$
Obviously, $\|H-H_n\|\to 0$ as $n\to \infty $, where $\|\cdot \|$ is the norm in the class of
bounded operators. Clearly,
    \begin{equation}
    H_n=H_{n-1}+W_n, \ \
    W_n=\sum_{r=M_{n-1}+1}^{M_n} V_r. \label{W_n}
    \end{equation}
Each operator $H_n$, $n\geq 1$, is considered as a perturbation of
the previous operator $H_{n-1}$. Every operator $H_n$ is
periodic, however the periods go to infinity as $n \to \infty$. It is shown
that there is a $\lambda_*$, $\lambda_*=\lambda_*(V)$, such
that the semiaxis $[\lambda _*, \infty )$ is contained in the
spectra of {\bf all} operators $H_n$. For every operator
$H_n$ there is a set of eigenfunctions (corresponding to the
semiaxis) being close to plane waves:
for every $\vec k $ in an extensive subset $\mathcal{G} _n$ of $\R^2$, there
is a solution $\Psi_{n}(\vec k, \vec x)$ of the differential equation
$H_n \Psi _n=\lambda _n\Psi _n$, which can be described by the formula:
    \begin{equation}\label{na}
    \Psi_n (\vec k, \vec x)
    =e^{i\langle \vec k, \vec x \rangle}\left(1+ u_{n}(\vec k, \vec
    x)\right),\ \ \
    \| u_{n}\|_{L^{\infty }(\R^2)}\underset{|\vec k| \to \infty}{=}O(|\vec k|^{-\gamma _1}),\ \ \ \gamma _1>0,
    \end{equation}
where $ u_{n}(\vec k, \cdot)$ has periods $2^{M_n-1}\vec d_1,
2^{M_n-1}\vec d_2$.
The corresponding eigenvalue $\lambda_n (\vec k)$ is close to $|\vec
k|^{2}$:
    \begin{equation}\label{back}
    \lambda_n(\vec k)=_{|\vec k| \to \infty}|\vec k|^{2}+
    O\left(|\vec k|^{-\gamma _2}\right),\ \ \ \gamma _2>0.
    \end{equation}
The non-resonant set $\mathcal{G} _{n}$ for which (\ref{back}) holds,
is proven to be extensive in $\R^2$:
    \begin{equation}\label{16b}
    \frac{\left|\mathcal{G} _{n}\cap
    B_R \right|}{\left| B_R\right|}=_{R\to \infty }1+O(R^{-\gamma _3}).
    \end{equation}
The estimates (\ref{na}) -- (\ref{16b}) are uniform in $n$.
The set ${\mathcal D}_{n}(\lambda)$ is defined as the level
(isoenergetic) set for the non-resonant eigenvalue $\lambda_n(\vec
k)$: $$ {\mathcal D} _{n}(\lambda)=\left\{ \vec k \in \mathcal{G}
_n:\lambda_n(\vec k)=\lambda \right\}.$$ This set is proven to
be a slightly distorted circle with a finite number of holes.
The set ${\mathcal D} _{n}(\lambda)$ can be
described by the formula:
     \begin{equation} {\mathcal D}_{n}(\lambda)=\{\vec k:\vec k=
    \varkappa_{n}(\lambda, \vec{\nu})\vec{\nu},
    \ \vec{\nu} \in {\mathcal B}_{n}(\lambda)\}, \label{Dn}
    \end{equation}
where ${\mathcal B}_{n}(\lambda )$ is a subset  of the unit circle
$S_1$. The set ${\mathcal B}_{n}(\lambda )$ can be interpreted as
the set of possible directions of propagation for  almost plane
waves (\ref{na}). It is shown that $\{{\mathcal
B}_n(\lambda)\}_{n=1}^{\infty }$  is a decreasing sequence of sets,
since on each step more and more directions are excluded.  Each
${\mathcal B}_{n}(\lambda )$ has an asymptotically full measure on
$S_1$ as $\lambda \to \infty $:
    \begin{equation}\label{Bn}
    L\left({\mathcalB}_{n}(\lambda )\right)=_{\lambda \to \infty }2\pi
    +O\left(\lambda^{-\gamma _3 /2}\right),
    \end{equation}
the estimate being uniform in $n$. The set ${\mathcal B}_{n}(\lambda)$ has
only a finite number of holes, however their number is growing with
$n$. More and more holes of a smaller and smaller size are added at
each step. The value $\varkappa_{n}(\lambda ,\vec \nu
)-\lambda^{1/2}$ gives the deviation of ${\mathcal D}_{n}(\lambda)$
from the perfect circle of  radius $\lambda^{1/2}$  in
direction $\vec \nu $. It is proven that the deviation is
asymptotically small uniformly in $n$:
    \begin{equation}\label{hn}
    \varkappa_{n}(\lambda ,\vec \nu)
    =\lambda^{1/2}+O\left(\lambda^{-\gamma _4 }\right),\ \ \ \
    \frac{\partial \varkappa_{n}(\lambda ,\vec \nu)}{\partial \varphi
    }=O\left(\lambda^{-\gamma _5 }\right), \ \ \ \gamma _4, \gamma _5>0,
    \end{equation}
$\varphi $ being an angle variable $\vec\nu =(\cos \varphi ,\sin \varphi )$.

On each step  more and more points are excluded from the
non-resonant sets $\mathcal{G} _n$ and, thus, $\{ \mathcal{G} _n
\}_{n=1}^{\infty }$ is a decreasing sequence of sets. The set
$\mathcal{G} _\infty $ is defined as the limit set: $\mathcal{G}
_\infty=\cap _{n=1}^{\infty }\mathcal{G} _n $. It has an infinite
number of holes in each bounded region, but nevertheless satisfies
the relation (\ref{full}). For every $\vec k \in \mathcal{G} _\infty
$ and every $n$, there is a generalized eigenfunction of $H_n$
of the type  (\ref{na}). It is proven that  the sequence of $\Psi
_n(\vec k, \vec x)$ has a limit in $L^{\infty }(\R^2)$ as $n\to
\infty$, when $\vec k \in \mathcal{G} _\infty $. The function $\Psi
_{\infty }(\vec k, \vec x) =\lim _{n\to \infty }\Psi _n(\vec k, \vec
x)$ is a generalized eigenfunction of $H$. It can be written in the
form (\ref{aplane})--(\ref{aplane1}). Naturally, the corresponding
eigenvalue $\lambda _{\infty }(\vec k) $ is the limit of $\lambda_n(\vec k )$ as $n \to \infty $.
Expansion with respect to the generalized eigenfunctions $\Psi
_{\infty }(\vec k, \cdot)$, $\vec k \in \mathcal{G} _\infty
$, will give a reducing subspace for $H$, with corresponding spectral resolution arising as the limit of spectral resolutions for the approximating periodic operators $H_n$.

To study these, one needs properties of the limit ${\mathcal B}_{\infty}(\lambda)$ of ${\mathcal
B}_n(\lambda)$:
    $${\mathcal B}_{\infty}(\lambda)=\bigcap_{n=1}^{\infty} {\mathcal
B}_n(\lambda),\ \ \ {\mathcal B}_n(\lambda) \subset {\mathcal B}_{n-1}(\lambda).$$
This set has a Cantor type structure on the unit circle. That it has asymptotically full measure (\ref{B}) follows from (\ref{Bn}).
We prove that the sequence $\varkappa _n(\lambda ,\vec \nu )$, $n=1,2,... ,$,
describing the isoenergetic curves $\mathcal{D}_n(\lambda)$,  quickly converges as $n\to \infty$.
Hence, ${\mathcal D}_{\infty}(\lambda)$ can be described as the
limit of ${\mathcal D}_n(\lambda)$ in the sense (\ref{D}), where
$\varkappa _{\infty}(\lambda, \vec{\nu})=\lim _{n \to \infty}
\varkappa _n(\lambda, \vec{\nu})$ for every $\vec{\nu} \in {\mathcal
B}_{\infty}(\lambda)$. It is shown that the derivatives of the
functions $\varkappa _n(\lambda, \vec{\nu})$ (with respect to the
angle variable $\varphi $ on the unit circle) have a limit as $n\to
\infty $ for every $\vec{\nu} \in {\mathcal B}_{\infty}(\lambda)$.
We denote this limit by $\frac{\partial \varkappa_{\infty}(\lambda
,\vec \nu)}{\partial \varphi }$. Using (\ref{hn}) we  prove that
    \begin{equation}\label{Dec9a}
    \frac{\partial \varkappa_{\infty}(\lambda ,\vec \nu)}{\partial
    \varphi }=O\left(\lambda^{-\gamma _5 }\right).
    \end{equation}
Thus, the limit curve ${\mathcalD}_{\infty}(\lambda)$ has a tangent vector in spite of its Cantor
type structure, the tangent vector being the limit of the
corresponding tangent vectors for ${\mathcal D}_n(\lambda)$ as $n\to
\infty $.  The curve ${\mathcal D}_{\infty}(\lambda)$ takes the form of
  a slightly distorted circle with
an infinite number of holes.

Let   $\mathcal{G}_n'$ be a bounded Lebesgue measurable subset of  $\mathcal{G}_n$.
We consider the spectral projection
$E_n\left(\mathcal{G}_n'\right)$ of $H_n$, corresponding to functions
$\Psi _n(\vec k ,\vec x)$, $\vec k \in \mathcal{G}_n'.$
By \cite{Ge}, $E_n\left( \mathcal{G}'_n\right): L^2(\R^2)\to
L^2(\R^2)$ can be represented by the formula:
    \begin{equation}\label{s}
    E_n\left( \mathcal{G}'_n\right)F=\frac{1}{4\pi ^2}\int
    _{ \mathcal{G}'_n}\bigl( F,\Psi _n(\vec
    k )\bigr) \Psi _n(\vec
    k) \,d\vec k
    \end{equation}
for any $F\in {\mathcal C_c}(\R^2)$, the continuous, compactly supported functions on $\R^2$. Here and below $\bigl( \cdot ,\cdot \bigr)$
is the canonical scalar product in $L^2(\R^2)$, i.e.,
    $$
    \bigl( F,\Psi _n(\vec k )\bigr)=\int _{\R^2}F(x)\overline{\Psi _n(\vec k ,\vec x)}\,d\vec x.
    $$
The above formula can be rewritten in the form
    \begin{equation}\label{ST}
    E_n\left(\mathcal{G}'_n\right)=S_n\left(\mathcal{G}'_n\right)T_n \left(
    \mathcal{G}'_n\right),
    \end{equation}
    $$T_n:{\mathcal C_c}(\R^2) \to L^2\left(  \mathcal{G}'_n\right), \ \
    \ \ S_n:L^\infty\left( \mathcal{G}'_n\right)\to L^2(\R^2),$$
    \begin{equation}\label{eq2}
    (T_nF)(\vec k) =\frac{1}{2\pi }\bigl( F,\Psi _n(\vec
    k )\bigr) \mbox{\ \ for any $F\in {\mathcal C_c}(\R^2)$},
    \end{equation}
$T_nF$ being in $L^{\infty }\left(  \mathcal{G}'_n\right)$, and
    \begin{equation}\label{ev}
    (S_nf)(\vec x) = \frac{1}{2\pi }\int _{  \mathcal{G}'_n}f (\vec k)\Psi _n(\vec
    k ,\vec x)\,d\vec k  \mbox{\ \ for any $f \in L^{\infty }\left(
    \mathcal{G}'_n\right)$.}
    \end{equation}
By \cite{Ge},
 \beq\label{eq:T_n bound}
 \|T_nF\|_{L^2\left( \mathcal{G}'_n\right)}\leq \|F\|_{L^2(\R^2)}
 \eeq
and
 \beq\label{eq:S_n bound}
 \|S_nf \|_{L^2(\R^2)}\leq \|f \|_{L^2\left(\mathcal{G}'_n\right)}.
 \eeq
Hence, $T_n$ and $S_n$ can be extended by continuity from ${\mathcal C_c}(\R^2)$ and $L^{\infty }\left(  \mathcal{G}'_n\right)$ to $L^2(\R^2)$ and $L^2\left( \mathcal{G}'_n\right)$,
respectively. Obviously, $T_n^{*}=S_n$. Thus, the operator $E_n\left( \mathcal{G}'_n\right)$ is described by (\ref{ST}) in the whole space $L^2(\R^2)$.

In what follows we will use these operators for the case where $\mathcal{G}'_n$ is given by
\begin{equation}
\mathcal{G}_{n, \lambda}=\{ \vec k \in {\mathcal{G}}_n:
\lambda_n(\vec k) < \lambda\}. \label{d}
\end{equation}
for finite sufficiently large $\lambda$. This set is Lebesgue measurable since ${\mathcal{G}}_n $ is
open and $\lambda_n(\vec k)$ is continuous on $
{\mathcal{G}}_n$.

Let
    \begin{equation}\label{dd}
    \mathcal{G}_{\infty, \lambda }=\left\{\vec k \in
    \mathcal{G}_{\infty }: \lambda _{\infty }(\vec k )<\lambda
    \right\}.
    \end{equation}
The function $\lambda _{\infty }(\vec k )$ is a Lebesgue
measurable function, since it is a limit of a sequence of
measurable functions. Hence, the set  $\mathcal{G}_{\infty, \lambda
}$ is measurable. The sets $\mathcal{G}_{n, \lambda
}$ and $\mathcal{G}_{\infty, \lambda
}$ are also bounded. It is shown in \cite{KL3} that
the measure of the symmetric difference of the
two sets $\mathcal{G}_{\infty, \lambda }$ and $\mathcal{G}_{n,
\lambda}$ converges
 to zero as $n \to
\infty$, uniformly in $\lambda$ in every bounded interval:
    $$\lim _{n\to \infty }\left|\mathcal{G}_{\infty, \lambda }\,\Delta\, \mathcal{G}_{n, \lambda
    }\right|=0.$$

Next, we consider the sequence of operators $T_n(\mathcal{G}_{\infty , \lambda})$ which are given by (\ref{eq2}) and act from $L^2(\R^2)$ to $L^2(\mathcal{G}_{\infty , \lambda})$.
It is proven in \cite{KL3} that the sequence $T_n(\mathcal{G}_{\infty ,\lambda})$  has a strong limit $T_{\infty}(\mathcal{G}_{\infty , \lambda})$. The operator $T_{\infty }(\mathcal{G}_{\infty , \lambda})$ satisfies $\|T_{\infty }\|\leq 1$ and can be described by the formula
$(T_{\infty }F)(\vec k)=\frac{1}{2\pi }\bigl( F,\Psi _{\infty }(\vec k)\bigr) $ for any $F\in {\mathcal C_c}(\R^2)$.
The convergence of $T_n(\mathcal{G}_{\infty , \lambda})F$ to
$T_{\infty }(\mathcal{G}_{\infty , \lambda})F$ is uniform in $\lambda $ for every $F\in L^2(\R^2)$.
We also consider the sequence of operators $S_n(\mathcal{G}_{\infty ,\lambda })$ which are given by (\ref{ev}) with ${\mathcal G}_n'=\mathcal{G}_{\infty , \lambda}$:
    \begin{equation} \label{SnGDef}
    S_n(\mathcal{G}_{\infty , \lambda}):\ L^2(\mathcal{G}_{\infty , \lambda})\to L^2(\R^2).
    \end{equation}
It is proven in \cite{KL3} that the sequence of operators
$S_n(\mathcal{G}_{\infty , \lambda})$ has a strong limit $S_{\infty }(\mathcal{G}_{\infty , \lambda})$. It follows $T^*_{\infty }(\mathcal{G}_{\infty , \lambda})=S_{\infty }(\mathcal{G}_{\infty , \lambda})$. Moreover, a slight modification of the proof (see Appendix 1 below) gives convergence  in operator norm sense as $n\to\infty$, uniform in $\lambda$.  Moreover, the estimate
 \begin{equation}
    \label{June5} \|S_{\infty }(\mathcal{G}_{\infty , \lambda})-S_0(\mathcal{G}_{\infty , \lambda})\|<c\lambda _*^{-\gamma _6}, \ \ \gamma _6>0,
    \end{equation}
    and, therefore,
     \begin{equation}
    \label{June5a} \|T_{\infty }(\mathcal{G}_{\infty , \lambda})-T_0(\mathcal{G}_{\infty , \lambda})\|<c\lambda _*^{-\gamma _6}, \ \ \gamma _6>0,
    \end{equation}
     holds for $\lambda > \lambda_*$, $c$ not depending on $\lambda $.

The operator $S_{\infty }(\mathcal{G}_{\infty , \lambda})$ satisfies $\|S_{\infty }\|=1$ and can be described by the formula
    \begin{equation}\label{ev1}
    (S_{\infty }f) (\vec x)= \frac{1}{2\pi }\int _{\mathcal{G}_{\infty , \lambda}}f (\vec k)\Psi _{\infty }(\vec
    k,\vec x) \,d\vec k
    \end{equation}
for any $f \in L^{\infty }\left( \mathcal{G}_{\infty , \lambda}\right)$.

The spectral projections
$E_n(\mathcal{G}_{\infty , \lambda })$ converge in norm to
$E_{\infty }(\mathcal{G}_{\infty , \lambda })$ in $L^2(\R^2)$ as $n$
tends to infinity, since $T_n=S_n^*$. The operator $E_{\infty }(\mathcal{G}_{\infty ,
\lambda })$ is a spectral projection of $H$. It can be represented
in the form $E_{\infty }(\mathcal{G}_{\infty , \lambda })=S_\infty(\mathcal{G}_{\infty , \lambda })
T_{\infty }(\mathcal{G}_{\infty , \lambda })$, where $S_{\infty }(\mathcal{G}_{\infty , \lambda })$ and $T_{\infty }(\mathcal{G}_{\infty , \lambda })$ are limits in norm of $S_n(\mathcal{G}_{\infty , \lambda })$ and
$T_n(\mathcal{G}_{\infty , \lambda })$, respectively.

For any $F\in
{\mathcal C_c}(\R^2)$ we have
    \begin{equation}\label{s1}
    E_{\infty }\left(\mathcal{G}_{\infty , \lambda }\right)F=\frac{1}{4\pi ^2}\int
    _{ \mathcal{G}_{\infty , \lambda }}\bigl( F,\Psi _{\infty }(\vec
    k)\bigr) \Psi _{\infty }(\vec
    k ) \,d\vec k ,
    \end{equation}
    \begin{equation}\label{s1uu}
    HE_{\infty }\left(\mathcal{G}_{\infty , \lambda }\right)F=\frac{1}{4\pi ^2}\int
    _{ \mathcal{G}_{\infty , \lambda }}\lambda _{\infty }(\vec k )
    \bigl( F,\Psi _{\infty }(\vec k )\bigr) \Psi _{\infty }(\vec k ) \,d\vec k .
    \end{equation}

 One also has the Parseval formula
\begin{equation} \label{Parseval}
\| E_{\infty}(\mathcal{G}_{\infty,\lambda }) F\|^2 = \frac{1}{4\pi^2} \int_{\mathcal{G}_{\infty,\lambda}} |(F, \Psi_{\infty}(\vec k)|^2\,d \vec k
\end{equation}
and the estimate
 \begin{equation}
    \label{June5b} \|E_{\infty }(\mathcal{G}_{\infty , \lambda})-S_0T_0(\mathcal{G}_{\infty , \lambda})\|<c\lambda _*^{-\gamma _6}, \ \ \gamma _6>0.
    \end{equation}
    Note that
    \begin{equation}
    S_0T_0(\mathcal{G}_{\infty , \lambda})={\mathcal F}^*\chi (\mathcal{G}_{\infty , \lambda}){\mathcal F}. \label{June5c}
    \end{equation}


The projections $E_\infty
(\mathcal{G}_{\infty,\lambda })$ are increasing in $\lambda$ and have a strong limit
$E_\infty(\mathcal{G}_{\infty})$ as $\lambda $ goes to infinity.
Hence,  the operator $E_{\infty
}(\mathcal{G}_{\infty})$ is a projection. The projections
$E_\infty(\mathcal{G}_{\infty,\lambda })$, $\lambda \in \R$, and
$E_\infty(\mathcal{G}_{\infty})$ reduce the operator $H$.
The family of projections
$E_\infty(\mathcal{G}_{\infty,\lambda} )$ is the resolution of the
identity of the operator $HE_\infty(\mathcal{G}_{\infty})$ acting in
$E_\infty(\mathcal{G}_{\infty})L^2(\R^2)$.  Further we denote $E_\infty(\mathcal{G}_{\infty})$  just by $E_\infty $ and use
\begin{equation}
\|E_\infty  -{\mathcal F}^*\chi (\mathcal{G}_{\infty }){\mathcal F}\|<c\lambda _*^{-\gamma _6}, \ \ \gamma _6>0.\label{March21a}
    \end{equation}
Obviously, the r.h.s.\ can be made arbitrary small by an appropriate choice of $\mathcal{G}_{\infty }$.

Absolute continuity of the restriction of $H$ to the range of $E_{\infty}$ is established in \cite{KL3}. In addition to the above mentioned convergence of the spectral projections of $H_n$ to those of $H$, uniform in $\lambda \ge \lambda_*$ for sufficiently large $\lambda_* = \lambda_*(V)$, this requires an analysis of the continuity properties of  the level curves
$\mathcal{D}_{\infty }(\lambda )$ with respect to $\lambda $.

In what follows, we want to establish that this branch of the spectrum of $H$ also leads to ballistic transport. For this we may need to increase the parameter $\lambda_*$ in a controlled way. We denote the new $\lambda_*$ by $\lambda_{**}$, with value to be specified later. This means a  change of the set ${\mathcal G}_\infty $: Instead we will consider the set ${\mathcal G}_\infty \setminus B_{k^{**}}$, where here and below $k^{**} = \sqrt{\lambda^{**}}$. With a slight abuse of notation we will denote this set again by ${\mathcal G}_{\infty}$.
For any fixed value of $\lambda _{**}$ the  projector $E_\infty({\mathcal G}_\infty)$ corresponds to a sufficiently rich branch of the absolutely continuous spectrum covering the half-line $[\lambda _{**},\infty)$.




\subsubsection{Extension of $\lambda_\infty(\vec k)$  from $\mathcalG_\infty$  to $\R^2$} \label{extlambda}

First, we extend the function  $\lambda_\infty(\vec k)$ from $\mathcalG_\infty$ to $\R^2$, the result being a $C^M(\R^2)$ function.
Note that the extended function is not an eigenvalue outside of $\mathcalG_\infty$.

Indeed, let $M$ be a natural number (in fact, we will need $M=7$ later). First,
following \cite{KL3}, we represent $\lambda _{\infty }(\vec k)-k^2$, $k:=|\vec k|$, 
$\vec k \in {{\mathcal G}}_{\infty }$,
in the form:
$$\lambda _{\infty }(\vec k)-k^2=\lambda_1(\vec k)-k^2+\sum _{n=1}^{\infty }\left( \lambda_{n+1}(\vec k)-\lambda_n(\vec k)\right). $$
By Theorem 2.6 in \cite{KL3}, with $D^m := \partial_1^{m_1} \partial_2^{m_2}$ we obtain
 \begin{equation}\label{eigenvalue-1}
 \left|  D^m_{}\left(\lambda_1(\vec k)-k^2\right)\right|<Ck^{-\gamma _2+\gamma _0|m|}
 \end{equation}
when $\vec k $ is in the $k^{-\gamma _0}$-neighborhood of ${{\mathcal G}}_1\supset {{\mathcal G}}_{\infty }$ and the constant depends only on $V$ and $m$.
Moreover, by Theorem 3.8 and  e.t.c.\ in \cite{KL3},
 \begin{equation}\label{eigenvalue-n}
 \left|  \lambda_{n+1}(\vec k)-\lambda_n(\vec k)\right|<e^{-k^{\eta s_n}},
 \end{equation}
for any $n \geq 1$, where $s_n=2^{n-1}s_1,$ $s_1$ being chosen sufficiently small with $0<s_1<10^{-4}$.
The value of $s_1$  is chosen at the beginning of the iteration procedure and, eventually,  $\lambda_*(V)$ and the constants in the estimates  depend on $s_1$.
Estimate \eqref{eigenvalue-n} is valid in the $(\epsilon _n k^{-1-\delta_0 })$-neighborhood of each $\vec k \in {{\mathcal G}}_n\supset {{\mathcal G}}_{\infty }$, where
$\epsilon _n=e^{-\frac 14k^{\eta s_n}}$ and  $\delta_0 >0$.
The constant $\frac 14$ in the definition of $\epsilon _n$, see \cite{KL3}, is chosen at random.
Instead of $\frac 14 $, one can take any fraction $\frac{1}{M+1}$, $M\geq 1$. This will lead, generally speaking, to an increase of $\lambda_*(V)$, when $M>3$.
We will denote the new $\lambda _*(V)$ by $\lambda _{**}(V,M)$.
Further we use the notation $\epsilon _n=e^{-\frac{1}{M+1}k^{\eta s_n}}$ and assume $k^2>\lambda _{**}(V,M)$.
Then we can rewrite \eqref{eigenvalue-n} as
 \begin{equation}\label{eigenvalue-n*}
 \left| \lambda_{n+1}(\vec k)-\lambda_n(\vec k)\right|< \epsilon _n^{M+1}
 \end{equation}
in the $(\epsilon _n k^{-1-\delta_0 })$-neighborhood of any $\vec k\in {{\mathcal G}}_n$.
Using analyticity of $\lambda _{n+1}(\vec k)$ and $\lambda _{n}(\vec k)$ in the complex $(\epsilon _n k^{-1-\delta_0 })$-neighborhood of any $\vec k \in {{\mathcal G}}_n$,
we obtain (see \cite{KL3})
 \begin{equation}\label{derivative-eigenvalue-n*}
 \left| D^m\left(\lambda_{n+1}(\vec k)-\lambda_n(\vec k)\right) \right|< \epsilon _n^{M+1-|m|} k^{(1+\delta_0 )|m|}
 \end{equation}
in ${{\mathcal G}}_n$ for all $m$.
Next, let $\eta _1(\vec k)$ be a function in $C^\infty$ with support in the (real) $k^{-\gamma _0}$-neighborhood of ${{\mathcal G}}_1$,
satisfying $\eta _1=1$ on ${{\mathcal G}}_1$ and $\left|D^{m}\eta _1(\vec k)\right|<k^{\gamma _0|m|}$.
This is possible since we can take a convolution of the characteristic function of the $\frac 12 k^{-\gamma _0}$-neighborhood of ${{\mathcal G}}_1$ with  $\omega (2k^{\gamma _0} \vec k)$,
where $\omega $ is a smooth cut-off function with support in the unit disc centered at the origin.
Similarly,  let $\eta _n(\vec k)$, $n\geq 2$,  be a $C^\infty$ function with support in the $(\epsilon _n k^{-1-\delta_0 })$-neighborhood of ${{\mathcal G}}_n$,
satisfying $\eta _n=1$ on ${{\mathcal G}}_n$ and
 \begin{equation}\label{derivative-eta-n}
 \left|D^{m}\eta _n (\vec k)\right|\leq \left(\epsilon _n k^{-1-\delta_0 }\right)^{-|m|}.
 \end{equation}
In the estimate \eqref{eigenvalue-1}, $\gamma _2=2-30s_1-20\delta_0 $, $\gamma _0=1+16s_1+11\delta_0 $. However,  \eqref{eigenvalue-1} can be improved when
$|m|<k^{s_1/2}$, see Lemma 2.5 in \cite{KL3}.
In this case, one can take $\gamma _0=3s_1+2\delta_0 $.
Choose $s_1$ small enough so that $2\gamma_0 M < \gamma_2$, i.e.,
 \bdm
 2(3s_1+2\delta_0 )M<2-30s_1-20\delta_0
 \edm
and for sufficiently large $k$, $M < k^{s_1/2}$ and so \eqref{eigenvalue-1} holds with $\gamma _0=3s_1+2\delta_0$.

Next, we extend $\lambda _{\infty }(\vec k)-k^2$ from ${{\mathcal G}}_{\infty }$ to $\R^2$ using the formula
\begin{equation}\lambda _{\infty }(\vec k)-k^2=(\lambda_1(\vec k)-k^2)\eta _1(\vec k)+\sum _{n=1}^{\infty }\left( \lambda_{n+1}(\vec k)-\lambda_n(\vec k)\right)\eta _{n+1}(\vec k).\label{May31} \end{equation}
It follows from \eqref{eigenvalue-n*} and \eqref{derivative-eigenvalue-n*} that the series converges in $C^M(\R^2)$.
Moreover, the next lemma follows from \eqref{eigenvalue-n*}--\eqref{derivative-eta-n}.
\begin{lem}\label{lambda}
For every natural number $M$, there exists $\lambda_{**}(V,M)>0$ such that
 the function $\lambda _{\infty }(\vec k)-k^2$ can be extended, as a
${\mathcal C^{M}}$ function, from ${{\mathcal G}}_{\infty,\lambda_{**}}$ to $\R$ and it satisfies
\begin{equation}\label{eigenvalue}
\left|  D^m \left(\lambda _{\infty }(\vec k)-k^2\right)\right|<C_M k^{-\gamma _2+\gamma _0|m|},
\end{equation}
for any $ m \in \N_0^2$ with $|m| \leq M$,
where $-\gamma _2+2\gamma_0 M < 0$.
\end{lem}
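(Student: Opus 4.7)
The plan is to verify directly that formula \eqref{May31} extends $\lambda_\infty(\vec k) - k^2$ to $\R^2$ as a $C^M$ function obeying the claimed bound \eqref{eigenvalue}. Each term of the series is smooth by construction (an analytic eigenvalue-type function multiplied by a smooth cutoff), so it suffices to estimate the $C^M(\R^2)$-norms term by term and show the tail converges rapidly in $n$.

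For the head term $(\lambda_1(\vec k) - k^2)\eta_1(\vec k)$, the Leibniz rule together with \eqref{eigenvalue-1} (using the improved exponent $\gamma_0 = 3s_1 + 2\delta_0$, valid for $|m| \le M < k^{s_1/2}$, hence for $k$ sufficiently large depending on $M$) and the derivative bound $|D^j \eta_1| \le C_j k^{\gamma_0 |j|}$ yield a contribution of size $C_M k^{-\gamma_2 + \gamma_0|m|}$, exactly matching the target.

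For $n \ge 1$, observe that $\supp \eta_{n+1}$ lies in the $\epsilon_{n+1} k^{-1-\delta_0}$-neighborhood of $\mathcal{G}_{n+1} \subset \mathcal{G}_n$, which, since $\epsilon_{n+1} \le \epsilon_n$, is contained in the $\epsilon_n k^{-1-\delta_0}$-neighborhood of $\mathcal{G}_n$. Consequently both \eqref{eigenvalue-n*} and \eqref{derivative-eigenvalue-n*} apply to $\lambda_{n+1} - \lambda_n$ throughout $\supp \eta_{n+1}$. Combining with \eqref{derivative-eta-n} via Leibniz,
\[
\left|D^m\bigl((\lambda_{n+1}-\lambda_n)\eta_{n+1}\bigr)\right| \le C_m \sum_{m' \le m}\epsilon_n^{M+1-|m'|}\, k^{(1+\delta_0)|m'|}\,(\epsilon_{n+1} k^{-1-\delta_0})^{-(|m|-|m'|)},
\]
whose dominant summand (at $m' = m$) is of order $\epsilon_n^{M+1-|m|} k^{(1+\delta_0)|m|}$. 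Since $\epsilon_n = \exp(-k^{\eta s_n}/(M+1))$ with $s_n = 2^{n-1} s_1$, these tail terms are super-exponentially small in both $k$ and $n$, so choosing $\lambda_{**}(V, M)$ large enough allows the entire tail series to be absorbed into the head-term bound $k^{-\gamma_2 + \gamma_0|m|}$, producing convergence in $C^M(\R^2)$ to a function with the required estimate.

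The compatibility $-\gamma_2 + 2\gamma_0 M < 0$ stated in the lemma is precisely the smallness condition $2(3s_1 + 2\delta_0)M < 2 - 30 s_1 - 20 \delta_0$ already built into the paragraph preceding \eqref{May31}, which ensures the bound is genuinely decaying for all $|m| \le M$. The main, essentially bookkeeping, obstacle is that differentiating $\eta_{n+1}$ costs up to $|m|$ factors of $(\epsilon_{n+1} k^{-1-\delta_0})^{-1}$, which are enormous since $\epsilon_{n+1}$ is super-exponentially small in $k$; however, the $M + 1 - |m| \ge 1$ gain factors of $\epsilon_n$ coming from $\lambda_{n+1} - \lambda_n$ dominate this loss, because $\epsilon_n$ and $\epsilon_{n+1}$ are of comparable super-exponential smallness in $k$. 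Once the Leibniz bookkeeping is organized as above, the lemma follows.
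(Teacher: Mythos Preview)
Your overall strategy coincides with the paper's: define the extension by \eqref{May31} and control each term via the Leibniz rule using \eqref{eigenvalue-1}, \eqref{eigenvalue-n*}--\eqref{derivative-eta-n}. The paper's own proof is the one-line assertion that the lemma ``follows from \eqref{eigenvalue-n*}--\eqref{derivative-eta-n}'', so in approach you and the paper agree.

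There is, however, a genuine gap in your bookkeeping. You assert that the dominant Leibniz summand occurs at $m'=m$ and that ``$\epsilon_n$ and $\epsilon_{n+1}$ are of comparable super-exponential smallness in $k$''. Both claims are false. Since $s_{n+1}=2s_n$ one has $\epsilon_{n+1}=\exp\bigl(-k^{2\eta s_n}/(M+1)\bigr)$, so for large $k$ a single factor $\epsilon_{n+1}^{-1}$ dominates any fixed power of $\epsilon_n$. The ratio of successive Leibniz terms (in $|m'|$) equals $\epsilon_n/\epsilon_{n+1}\gg 1$, hence the worst term is at $m'=0$, namely
\[
\epsilon_n^{M+1}\,\epsilon_{n+1}^{-|m|}\,k^{(1+\delta_0)|m|}
=\exp\!\Bigl(-k^{\eta s_n}+\tfrac{|m|}{M+1}\,k^{2\eta s_n}\Bigr)k^{(1+\delta_0)|m|},
\]
which blows up as $k\to\infty$ for every $|m|\ge 1$. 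Thus the displayed Leibniz estimate does not close with the cutoffs $\eta_{n+1}$ as defined (transitioning at scale $\epsilon_{n+1}k^{-1-\delta_0}$).

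The natural repair is to match the cutoff scale to the analyticity radius of $\lambda_{n+1}-\lambda_n$: by the text above \eqref{eigenvalue-n*}, this difference is analytic and bounded by $\epsilon_n^{M+1}$ on the full $(\epsilon_n k^{-1-\delta_0})$-neighborhood of $\mathcal G_n\supset\mathcal G_{n+1}$. Replacing $\eta_{n+1}$ in \eqref{May31} by a cutoff that equals $1$ on $\mathcal G_{n+1}$ and transitions to $0$ at scale $\tfrac12\epsilon_n k^{-1-\delta_0}$ gives $|D^{m-m'}\eta|\le C(\epsilon_n k^{-1-\delta_0})^{-|m-m'|}$, so every Leibniz term is $\le C_m\,\epsilon_n^{M+1-|m|}k^{(1+\delta_0)|m|}$, which is summable in $n$ and absorbed by the head-term bound; the rest of your argument then goes through verbatim. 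Without such a modification (or a sharper input from \cite{KL3}), the step you flag as ``essentially bookkeeping'' is in fact the point where the proof fails.
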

\begin{remark}
 For our needs $M=7$ is sufficient and in what follows we assume that the corresponding $s_1$ and  $\lambda_{**}$ are chosen for $M=7$.
 \end{remark}

 \subsubsection{Extension of $\Psi_{\infty }(\vec k,\vec x)$ from $\mathcalG_\infty$ to $\R^2$} \label{extpsi}

We extend $\Psi_\infty(\vec k,\vec x)$ by a formula analogous to \eqref{May31}:
\begin{equation} \label{May31a}
 \Psi_\infty(\vec k,\vec x)-e^{i\la \vec k, \vec x \ra}
 =\left(\Psi_1(\vec k,\vec x)-e^{i\la \vec k, \vec x \ra}\right)\eta_1(\vec k)
  + \sum_{n=1}^\infty \left(\Psi_{n+1}(\vec k,\vec x)-\Psi_n(\vec k,\vec x)\right)\eta_{n+1}(\vec k).\end{equation}
  The series converges by \eqref{s++app}. Using the last formula and \eqref{ev1}, we define $S_\infty(\wti{\mathcalG}_\infty )$ for any $\wti{\mathcalG}_\infty \supset {\mathcalG}_\infty $:

\begin{align} \label{Sinf} \left(S_\infty(\wti{\mathcalG}_\infty )f\right)(\vec x)
  &:= \frac{1}{2\pi}\int_{\wti{\mathcalG}_\infty } f(\vec k) \Psi_\infty(\vec k,\vec x)\,d\vec k.
   \end{align}
  It is easy to see that
\begin{equation} \label{June}
  S_\infty(\wti{\mathcalG}_\infty )
  = S_0(\wti{\mathcalG}_\infty )+\sum_{n=0}^\infty \bigl(S_{n+1}(\wti{\mathcalG}_\infty )-S_n(\wti{\mathcalG}_\infty ) \bigr)\eta_{n+1},
\end{equation}
 where $S_0(\wti{\mathcalG}_\infty )$ is defined by
 $$S_0(\wti{\mathcalG}_\infty )f= \frac{1}{2\pi}\int _{\wti{\mathcalG}_\infty }f(\vec k)e^{-i\la \vec k, \vec x \ra}d\vec k,$$
 $\eta_{n+1}$ is multiplication by $\eta_{n+1}(\vec k)$ and $S_n(\wti{\mathcalG}_\infty ) $ is given by \eqref{ev} with ${\mathcalG}_n'$ being the intersection of
 $\wti{\mathcalG}_\infty $ with the $(\epsilon _n k^{-1-\delta_0 })$-neighborhood of ${{\mathcal G}}_n$ for $n\geq 2$ and  the $k^{-\gamma _0}$-neighborhood of ${{\mathcal G}}_1$ for $n=1$.

Similarly to \eqref{June5}, we show that
 \begin{equation}
  \|S_\infty(\wti{\mathcalG}_\infty )-S_0(\wti{\mathcalG}_\infty )\|<c(V)\lambda_{**}^{-\gamma _6}.
  \label{2.36}
  \end{equation}
In what follows we assume that $\lambda_{**}$ is chosen so that, in particular, $c(V)\lambda_{**}^{-\gamma _6}\leq 1/2$. Thus we have
 \begin{equation}\label{S_infty}
 \| S_\infty(\wti{\mathcalG}_\infty )\|\leq 2.
 \end{equation}
Similarly, with $T_0$ the Fourier transform,
\begin{align}\label{T_infty}
  (T_\infty F)(\vec k) & := \frac{1}{2\pi} (F(\cdot), \Psi_\infty(\vec k, \cdot)) \notag \\
  &= (T_0 F)(\vec k)+ \sum_{n=0}^\infty \bigl((T_{n+1}-T_n)F\bigr)(\vec k)\eta_{n+1}(\vec k).
  \end{align}
 .

We need one more auxiliary result.

 \begin{lem}\label{Ttransform}
 For any given $L\in\N$ there exists $\lambda_{**}(V,L)$ such that for any $F \in {\mathcal C^\infty_0}(\R^2)$, the function $T_\infty F$ as defined above is in ${\mathcal C^L}(\R^2)$. Moreover, if $0\leq j \leq L$ and $m \in \N_0^2,\ |m|\leq L$, then
  \begin{equation}\label{T_infty bound}
  \left| |\vec k |^j D^m (T_\infty F) (\vec k)\right| < C(L, F),
  \end{equation}
 for all $\vec k \in \R^2$.
 \end{lem}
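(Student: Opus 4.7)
The plan is to control each term in the expansion
\[
(T_\infty F)(\vec k) = (T_0 F)(\vec k) + \sum_{n\ge 0}\bigl((T_{n+1}-T_n)F\bigr)(\vec k)\,\eta_{n+1}(\vec k)
\]
in a $|\vec k|^j$-weighted $C^L(\R^2)$ norm, and then sum. The leading piece is easy: $T_0$ is, up to normalization and the reflection $\vec k \mapsto -\vec k$, the Fourier transform, so $T_0 F$ lies in the Schwartz class whenever $F\in {\mathcal C^\infty_0}(\R^2)$, and $\bigl||\vec k|^j D^m(T_0 F)(\vec k)\bigr|$ is bounded by finitely many Schwartz seminorms of $F$ for all $j,|m|\le L$.

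The tail is controlled by combining two effects. Writing $\bigl((T_{n+1}-T_n)F\bigr)(\vec k) = \frac{1}{2\pi}\int F(\vec x)\,\overline{(\Psi_{n+1}-\Psi_n)(\vec k,\vec x)}\,d\vec x$, I would first invoke the analogue for eigenfunctions of the eigenvalue bound \eqref{eigenvalue-n*}: the constructions in \cite{KL3} give analyticity of $\Psi_{n+1}-\Psi_n$ in $\vec k$ on the complex $(\epsilon_n k^{-1-\delta_0})$-neighborhood of $\mathcalG_n$, with $L^\infty(\vec x)$-norm there bounded by $\epsilon_n^{M+1}$. Cauchy estimates in $\vec k$, combined with Leibniz's rule and the cutoff bound \eqref{derivative-eta-n}, yield $\bigl|D^m_{\vec k}[((T_{n+1}-T_n)F)\eta_{n+1}](\vec k)\bigr| \le C(F,L)\,\epsilon_n^{M+1-|m|}k^{(1+\delta_0)|m|}$ uniformly on $\supp \eta_{n+1}$, provided $M\ge |m|$. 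To insert the weight $|\vec k|^j$, I would integrate by parts $2j$ times in $\vec x$: using $|\vec k|^2 e^{-i\la\vec k,\vec x\ra} = -\Delta_{\vec x}e^{-i\la\vec k,\vec x\ra}$ and $\Psi_n = e^{i\la\vec k,\vec x\ra}(1+u_n)$, the Laplacians fall on $F(\vec x)(\overline{u_{n+1}-u_n})(\vec k,\vec x)$, which is compactly supported in $\vec x$; because the $\vec x$-derivatives of $u_n$ are uniformly controlled on $\supp F$ by the trigonometric-polynomial structure of the approximants, no additional $n$-growth is introduced and the super-exponential smallness survives, yielding an extra factor only depending on $F$ and $j$.

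Choosing $\lambda_{**}(V,L)$ large enough that Lemma~\ref{lambda} applies with extension parameter $M\ge L+1$, the resulting bound $\bigl||\vec k|^j D^m[((T_{n+1}-T_n)F)\eta_{n+1}](\vec k)\bigr| \le C(L,F)\epsilon_n^{M+1-L}$ is (doubly-exponentially) summable in $n$, so \eqref{T_infty bound} follows. The main technical obstacle I anticipate is the Cauchy-estimate step: one must verify that the analyticity radius $\epsilon_n k^{-1-\delta_0}$ really dominates the factor $\epsilon_n^{-|m|}$ produced by differentiating $\eta_{n+1}$. This is exactly the reason for the flexible parameter $M$ and the rescaling $1/(M+1)$ introduced in Section~\ref{extlambda}; taking $M$ large relative to $L$ turns every growing factor into a negligible correction to the super-exponential smallness of $\Psi_{n+1}-\Psi_n$.
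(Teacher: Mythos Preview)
Your decomposition into the Schwartz piece $T_0F$ plus a telescoping tail, together with Cauchy estimates in $\vec k$ for the $D^m$ part, is fine. The gap is in the step where you insert the weight $|\vec k|^j$ by integrating by parts in $\vec x$. The assertion that ``the $\vec x$-derivatives of $u_n$ are uniformly controlled on $\supp F$ by the trigonometric-polynomial structure of the approximants'' is not justified: it is the potentials $V_r$ that are trigonometric polynomials, not the Bloch corrections $u_n$. In the Fourier expansion $u_n=\sum_r C_r^{(n)}(\vec k)e^{i\langle \vec p_r^{(n)},\vec x\rangle}$ the significant modes sit near $\vec p_r^{(n)}\approx -\vec k$ (the ``near energy shell'' region $|\vec k+\vec p_r^{(n)}|\ll|\vec k|$), so each application of $\Delta_{\vec x}$ can produce a factor $|\vec p_r^{(n)}|^2\sim|\vec k|^2$, exactly cancelling the gain from your integration by parts. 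For the terms $n\ge 1$ this does no harm, since $\epsilon_n(k)=e^{-k^{\eta s_n}/(M+1)}$ already decays super-exponentially in $|\vec k|$ and absorbs any polynomial loss (in fact no integration by parts in $\vec x$ is needed there at all). But the $n=0$ term $(T_1-T_0)F\,\eta_1$ carries only the power-law smallness $\|u_1\|_{L^\infty}=O(k^{-\gamma_1})$, and for $j>\gamma_1$ your argument gives nothing.

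The paper's proof avoids this by not separating out the phase $e^{i\langle\vec k,\vec x\rangle}$. It expands $\Psi_n=\sum_r C_r^{(n)}(\vec k)e^{i\langle\vec k+\vec p_r^{(n)},\vec x\rangle}$ and integrates by parts $j$ times against the \emph{total} frequency $\vec k+\vec p_r^{(n)}$, producing a factor $|\vec k+\vec p_r^{(n)}|^{-j}$ in each term. For the modes with $|\vec k+\vec p_r^{(n)}|\ge|\vec k|/4$ this yields the required $|\vec k|^{-j}$ directly, using only $\sum_r|C_r^{(n)}|<2$. For the remaining modes with $|\vec k+\vec p_r^{(n)}|<|\vec k|/4$ the integration by parts is useless, and instead one invokes the eigenfunction equation to get the recursion
\[
C_r^{(n)}(\vec k)=\bigl(\lambda_n(\vec k)-|\vec k+\vec p_r^{(n)}|^2\bigr)^{-1}\sum_{r':\,|\vec p_{r'}^{(n)}-\vec p_r^{(n)}|<R_0}(W_n)_{r-r'}C_{r'}^{(n)}(\vec k),
\]
which, since the denominator is $\sim|\vec k|^2$ and the coupling only moves $r$ by a bounded amount, can be iterated to give $\sum_{|\vec k+\vec p_r^{(n)}|<|\vec k|/4}|C_r^{(n)}|\le C(j,V)|\vec k|^{-j}$. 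This off-shell decay is the missing ingredient; it is a genuine dynamical input from the equation and is not visible from the ``trigonometric-polynomial'' structure of the potential alone.
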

 A proof of this lemma is given in Appendix 2 below.
 \begin{remark}
In fact, for our needs $L=6$ is sufficient and in what follows we assume that the corresponding $\lambda_{**}$ is chosen for $L=6$.
 \end{remark}

 \subsection{The Case of a Quasi-periodic Potential.}
The main results in the case of quasi-periodic potential \cite{KS1} are completely analogous to those for limit-periodic potential in Section 2.1.1, the only difference being that $u_{\infty}$ in
\eqref{aplane} is quasi-periodic, i.e., has a representation analogous to that for the potential, but not necessarily a trigonometric polynomial. The operators $H_n$ in the approximation
procedure are, naturally, quite different from \eqref{W_n}. However, the rest of Section 2.2.2 is completely analogous for both types of potentials, the quasi-periodic case being even somewhat simpler, since convergence of the sequence $S_n$ in
norm,  proven in Appendix 1 for the limit-periodic case,  is already proven in   \cite{KS}, \cite{KS1} for the quasi-periodic potential.
 The extension of $\lambda_\infty(\vec k)$  and $E_\infty(\vec k)$ to $\R^2$ (Section 2.1.3) is also completely similar in both cases.
Note only that in the quasi-periodic case  Lemma~\ref{lambda} holds with $\gamma _2=2-88\mu \delta$, $\gamma _0=(40\mu +1)\delta$, $\delta>0$, by Theorem 3.3, Corollary 3.4 and Lemma 3.5 in \cite{KS1} and $\epsilon _n^{M+1}=k^{-\frac{\beta }{10}k^{r_{n-1}-r_{n-2}}}$ (see \eqref{eigenvalue-n*}), here
$\beta$ is a positive constant, $r_n$ is an increasing sequence going to infinity as $n\to \infty $, see Corollaries 5.4, 6.4 in \cite{KS1}.

\section{Proof of Proposition~\ref{Prop2}} \label{Prop2proof}

Let $\mathcalS$ be the class of functions in $T_\infty{\mathcal C_0^\infty}(\R^2)$, see \eqref{T_infty}. As shown in Lemma~\ref{Ttransform},
 if $\hatt{\Psi}_0 \in \mathcalS$, then
 \begin{equation}
 \bigl| |\vec k|^j D^{m}(\hatt{\Psi}_0)(\vec k)| <C(j,m,\hatt{\Psi}_0)
 \end{equation}
 for any $\vec k \in \R^2$ when $j \leq 6$ and $|m| \leq 4$.

 Let $\hatt{\Psi}_0 \in \mathcalS$ and
\beq\label{Def:Psi}
\Psi(\vec x,t):= \frac{1}{2\pi}\int_{\mathcalG_\infty} \Psi_\infty(\vec k,\vec x)e^{-it\lambda_\infty(\vec k)}\hatt{\Psi}_0(\vec k)\,d\vec k ,
\eeq
then this function solves the initial value problem \eqref{IVP}, where
\beq\label{Def:Psi-0}\Psi _0 (\vec x) =\frac{1}{2\pi}\int_{\mathcalG_\infty} \Psi_\infty(\vec k,\vec x)\hatt{\Psi}_0(\vec k)\,d\vec k  \eeq
and $\Psi _0 (\vec x) \in S_\infty \mathcal S=E_\infty\mathcal C_0^\infty$.  Obviously,  $S_\infty \mathcal S$ is dense in  $E_\infty L^2(\R^2)$.

The first step  of the proof is replacing $\mathcalG_\infty$ by a small neighborhood $\tilde \mathcalG_\infty $ and to estimate the resulting errors in the integrals. This is an important step, since  $\mathcalG_\infty$ is a closed Cantor-type set, while $\tilde \mathcalG_\infty $ is an open set.
The second step is integrating by parts in
an integral over $\tilde \mathcalG_\infty $ with the purpose of obtaining \eqref{ball-1}, the fact that $\tilde \mathcalG_\infty $ is open being used for handling boundary terms.
All further considerations
are essentially identical for the limit-periodic and quasi-periodic cases. The notations are mostly  identical, in situations where they are different we consider the limit-periodic case.

To get the lower bound \eqref{ball-1}, we first note that
$$ \|X\Psi\|^2_{L^2(\R^2)} \geq \|X\Psi\|^2_{L^2(B_R)} \geq \frac{1}{2}\|Xw\|^2_{L^2(B_R)}-\|X(\Psi-w)\|^2_{L^2(B_R)},$$
where $B_R$ is the open disc with radius $R$ centered at the origin, $R=c_0T$, $c_0$ to be chosen later, and $w(\vec x,t)$ is an approximation of $\Psi $ when $\mathcalG_\infty$ is replaced by its small neighborhood $\tilde \mathcalG_\infty $. Namely,
\beq\label{Def:w}
w(\vec x,t):=\frac{1}{2\pi}\int_{\wti{\mathcalG}_\infty}\Psi_\infty(\vec k,\vec x)e^{-it\lambda_\infty(\vec k)}\hatt{\Psi}_0(\vec k)\eta_\delta(\vec k)\,d\vec k,
\eeq
$\eta_\delta$ being a smooth cut-off function with support in a $\delta$-neighborhood $\wti{\mathcalG}_\infty$ of $\mathcalG_\infty$ and $\eta_\delta=1$ on $\mathcalG_\infty$.
The parameter $\delta$ $(0<\delta<1)$ will be chosen later to be sufficiently small and depend only on $\hatt\Psi_0$.  We take $\eta_\delta$  to be a convolution of a function
$\omega (\vec k/2\delta )$ with the characteristic function of the $\delta/2$-neighborhood of $\mathcalG_\infty$, where $\omega (\vec k )$ is a nonnegative ${\mathcal C_0^\infty}(\R^2)$-function with a support in the unit ball centered at zero and integral one. Then, $\eta _\delta \in {\mathcal C_0^\infty}(\R^2)$,
\begin{equation}
0\leq \eta _\delta \leq 1,\  \eta _\delta(\vec k)=1\; \mbox{when } \vec k\in \mathcalG_\infty, \ \eta _\delta(\vec k)=0\;  \mbox{when } \vec k  \not \in \wti{\mathcalG}_\infty,\
\|D^{m}\eta _\delta \|_{L^{\infty }}<C_{m}\delta ^{-|m |}. \label{eta-delta}
\end{equation}

To prove \eqref{ball-1}, we will show that there exist a positive constant $c_1$ and  constants $c_2$ and $c_3$  such that
    \beq\label{Ineq:main}
    \frac{2}{T}\int _0^\infty e^{-2t/T} \big\|Xw(\cdot, t)\big \|^2_{L^2(B_R)} dt \geq 6c_1 T^2 - c_2 T - c_3,
    \eeq
as long as $c_0$ in the definition of $R$ exceeds a certain value depending only on $\hatt \Psi _0$. In formula  \eqref{Ineq:main}, the constant $c_1=c_1(\hatt \Psi_0)$ depends on $ \hatt \Psi_0$, but not $\delta $ or $c_0$, while the constants $c_2=c_2(\hatt \Psi_0,\delta )$ and $c_3=c_3(\hatt \Psi_0,\delta )$ depend on $ \hatt \Psi_0$ and $\delta $, but not $c_0$.

We also prove that
    \beq\label{Ineq:remainder}
    \frac{2}{T}\int _0^\infty e^{-2t/T} \big\|X(\Psi-w)(\cdot, t)\big \|^2_{L^2(B_R)} dt \leq \gamma(\delta, \hatt \Psi_0)c_0^2 T^2,
    \eeq
  $\gamma(\delta, \hatt \Psi_0)=o(1)$  as  $\delta\to 0$ uniformly in $c_0$.


\begin{proof}[Proof of \eqref{Ineq:remainder}]
Since $\eta_\delta=1$ on $\mathcalG_\infty$,
 \begin{equation*}
 \Psi(\vec x,t)-w(\vec x,t)
 =-\frac{1}{2\pi}\int_{\wti{\mathcalG}_\infty\setminus \mathcalG_\infty}\Psi_\infty(\vec k,\vec x)e^{-it\lambda_\infty(\vec k)}\hatt{\Psi}_0(\vec k)\eta_\delta(\vec k)\,d\vec k=:f(\vec x,t).
 \end{equation*}
Since $\|X\| \leq R$, it suffices to show that
\beq \label{Ineq:f_{n,j}}
\|f(\cdot,t)\|_{L^2(\R^2)}^2  \leq \gamma (\delta, \hatt \Psi_0).
\eeq
Note that $f=S_\infty(\wti\mathcalG_\infty)g_1$, where  $g_1=e^{-it\lambda_\infty(\vec k)}\hatt{\Psi}_0(\vec k)\eta_\delta(\vec k)\chi(\wti\mathcalG_\infty \setminus \mathcalG_\infty)$ and
$S_{\infty}$ is defined by \eqref{Sinf}. Now, the estimate \eqref{S_infty} and Lebesgue's Dominated Convergence Theorem complete the proof, where Lemma~\ref{Ttransform} is used to show that
$\hatt{\Psi}_0(\vec k)$ decays sufficiently fast at infinity.

\end{proof}

\begin{proof}[Proof of \eqref{Ineq:main}]
Let
 \begin{equation}\label{v}
 v( \vec x,t):=\frac{1}{2\pi} \int_{\wti{\mathcalG}_\infty}\Psi_\infty(\vec k,\vec x)e^{-it\lambda_\infty(\vec k)}
 \nabla \Big(\hatt{\Psi}_0(\vec k)\eta_\delta(\vec k)\Big)\,d\vec k.
 \end{equation}
Then, using integration by parts and then \eqref{aplane}, we get
    \begin{align*}
    v( \vec x,t)
    & = -\frac{1}{2\pi}\int_{\wti{\mathcalG}_\infty}
    \Big[\nabla_{\vec k}\Psi_\infty(\vec k,\vec x)-it\Psi_\infty(\vec k,\vec x)
    \nabla \lambda_\infty(\vec k)\Big]e^{-it\lambda_\infty(\vec k)}
    \hatt{\Psi}_0(\vec k)\eta_\delta(\vec k)\,d\vec k \\
    & = -\frac{i}{2\pi}\int_{\wti{\mathcalG}_\infty} \Big[\vec x-t\nabla \lambda_\infty(\vec k)\Big]\Psi_\infty(\vec k,\vec x)
    e^{-it\lambda_\infty(\vec k)}\hatt{\Psi}_0(\vec k)\eta_\delta(\vec k)\,d\vec k \\
    & \hskip 0.5cm -\frac{1}{2\pi}\int_{\wti{\mathcalG}_\infty} e^{i\la \vec k, \vec x\ra}\nabla_{\vec k} u_\infty(\vec k,\vec x) e^{-it\lambda_\infty(\vec k)}\hatt{\Psi}_0(\vec k)\eta_\delta(\vec k)\,d\vec k,
    \end{align*}
where the boundary term is vanishing due to $\eta_\delta$ and the fast decay of $\hatt{\Psi}_0$. In short,
 $v=-iXw+it\phi-\phi_s,$ where
 \begin{align} \label{eq:phi}
    \phi(\vec x,t)&:=\frac{1}{2\pi} \int_{\wti{\mathcalG}_\infty} \nabla \lambda_\infty(\vec k)\Psi_\infty(\vec k,\vec x)
    e^{-it\lambda_\infty(\vec k)}\hatt{\Psi}_0(\vec k)\eta_\delta(\vec k)\,d\vec k \quad \text{ } \\
    \phi_s(\vec x,t)&:= \frac{1}{2\pi}\int_{\wti{\mathcalG}_\infty} e^{i\la \vec k, \vec x\ra} \nabla_{\vec k} u_\infty(\vec k,\vec x) e^{-it\lambda_\infty(\vec k)}\hatt{\Psi}_0(\vec k)\eta_\delta(\vec k)\,d\vec k, \notag
 \end{align}
and, therefore, $\|Xw\|^2_{L^2(B_R)}>\frac {t^2}{3}\|\phi \|^2_{L^2(B_R)}-\|v\|^2_{L^2(B_R)}-\|\phi _s\|^2_{L^2(B_R)}$. Integrating the last inequality with respect to $t$, we obtain:
    \begin{align}\label{ineq:lower-bound}
    & \f{2}{T} \int _0^\infty e^{-2t/T}  \|Xw\|^2_{L^2(B_R)}\,dt \notag\\
    & \ge \f{1}{3} \cdot \f{2}{T}\int _0^\infty t^2e^{-2t/T} \|\phi\|^2_{L^2(B_R)}\,dt -\f{2}{T}\int _0^\infty e^{-2t/T} \|v\|^2_{L^2(B_R)} \,dt \notag\\
    & \hskip 0.5cm-\f{2}{T}\int _0^\infty e^{-2t/T} \|\phi_s\|^2 \,dt =:\f{1}{3}I_1- I_2-I_3.
    \end{align}
Now we show that:
    \begin{align}
    I_1 & \ge 18(c_1 T^2 -c_2T),    
    \label{ineq:I_1}\\
    I_2 & \le c\delta^{-2} \|\hatt{\Psi}_0\|_{W_2^1(\R^2)}^2, \label{ineq:I_2}\\
    I_3 & \le c(V)\|\hatt{\Psi}_0\|_{L^2(\R^2)}^2. \label{ineq:I_3}
    \end{align}
Let us prove  \eqref{ineq:I_2} first. From \eqref{v}, we see that $v=S_\infty(\wti \mathcalG_\infty)g_2$, where $g_2=e^{-it\lambda_\infty(\vec k)}\nabla \bigl(\hatt{\Psi}_0(\vec k)\eta_\delta(\vec k)\bigr)$, and, therefore, by \eqref{S_infty}, we get
 \begin{align*}
 \|v\|_{L^2(\R^2)} & \le 2\|\nabla \Big(\hatt{\Psi}_0\eta_\delta \Big)\|_{L^2(\R^2)} \\
 & \le 2\Big(\|\nabla \hatt{\Psi}_0\|_{L^2(\R^2)} + \|\hatt{\Psi}_0\|_{L^2(\R^2)}\|\nabla \eta_\delta\|_{L^\infty(\R^2)} \Big)\\
 & \le c\delta^{-1}\|\hatt{\Psi}_0\|_{W_2^1(\R^2)}
 \end{align*}
Now \eqref{ineq:I_2} is obvious.

Estimate \eqref{ineq:I_3} can be obtained in the same way as \eqref{June5} or \eqref{2.36}  with $\nabla_{\vec k}u_\infty$ instead of $u_\infty$ (see Appendix 3 for details).

Finally, we show the estimate \eqref{ineq:I_1}. Substituting \eqref{aplane} into \eqref{eq:phi} yields
 \begin{align*}
  \phi(\vec x,t)
 &= \frac{1}{2\pi}\int_{\wti{\mathcalG}_\infty} \nabla \lambda_\infty(\vec k) e^{i\la \vec k,\vec x\ra}
    e^{-it\lambda_\infty(\vec k)}\hatt{\Psi}_0(\vec k)\eta_\delta(\vec k)\,d\vec k\\
 &+\frac{1}{2\pi}\int_{\wti{\mathcalG}_\infty} \nabla \lambda_\infty(\vec k) e^{i\la \vec k,\vec x\ra}  u_\infty(\vec k,\vec x)
    e^{-it\lambda_\infty(\vec k)}\hatt{\Psi}_0(\vec k)\eta_\delta(\vec k)\,d\vec k\\
 &=: \wti\phi(\vec x,t)+\wti\phi_s(\vec x,t).
 \end{align*}
We use
 \begin{equation*}
\|\phi\|^2_{L^2(B_R)} \ge \frac12\|\wti\phi\|^2_{L^2(B_R)}-\|\wti\phi_s\|_{L^2(\R^2)}^2=\frac12\|\wti\phi\|^2_{L^2(\R^2)}-\frac12\|\wti\phi\|^2_{L^2(\R^2\setminus B_R)}-\|\wti\phi_s\|_{L^2(\R^2)}^2.
\end{equation*}
Thus,
\begin{equation}\label{RR}
\begin{split}
&\frac{2}{T}\int_0^\infty t^2 e^{-2t/T}\|\phi\|^2_{L^2(B_R)}\,dt=\frac{2}{T}\int_0^\infty t^2 e^{-2t/T}\left(\frac12\|\wti\phi\|^2_{L^2(\R^2)}-\|\wti\phi_s\|_{L^2(\R^2)}^2\right)\,dt\cr &-\frac{1}{T}\int_0^\infty t^2 e^{-2t/T}\|\wti\phi\|^2_{L^2(\R^2\setminus B_R)}\,dt=:R_1-R_2.
\end{split}
\end{equation}
To get a lower bound for $R_1$, we notice that
\begin{equation}\label{R1new}
\frac12 \|\wti \phi \|_{L^2(\R^2)}^2 - \|\wti \phi_s \|_{L^2(\R^2)}^2=\frac12 \|S_0({\widetilde{{\mathcal G}}_\infty})g_3\|^2_{L^2(\R^2)}-\|(S_\infty({\widetilde{{\mathcal G}}_\infty})-S_0({\widetilde{{\mathcal G}}_\infty}))g_3\|^2_{L^2(\R^2)}
\end{equation}
where $g_3(\vec k):=\nabla \lambda_\infty(\vec k) \hatt{\Psi}_0(\vec k)\eta_\delta(\vec k)$. Now, using \eqref{2.36} with $c(V)\lambda_{**}^{-\gamma _6}\leq 1/4$ and noticing that $S_0$ is just the Fourier transform, we get
\begin{equation}\label{R1new1}
\begin{split}
&\frac12 \|\wti \phi \|_{L^2(\R^2)}^2 - \|\wti \phi_s \|_{L^2(\R^2)}^2\geq (\frac12-\frac14)\|g_3\|_{L^2(\wti \mathcalG_\infty)}^2\cr & = \frac14\int_{\wti \mathcalG_\infty} |\nabla \lambda_\infty(\vec k)|^2 |\hatt \Psi_0(\vec k)|^2 \eta_\delta(\vec k)^2 \,d \vec k\geq \frac14\int_{\mathcalG_\infty} |\vec k|^2 |\hatt \Psi_0(\vec k)|^2  \,d \vec k.
\end{split}
\end{equation}
Here we also used that on $\mathcalG_\infty$ we have $\eta_\delta=1$ and $|\nabla \lambda_\infty| \geq |\vec k|$.

The bound \eqref{R1new1} immediately implies the main estimate of the paper:
\begin{equation}\label{R1}
R_1\geq \frac18 T^2\int_{\mathcalG_\infty} |\vec k|^2 |\hatt \Psi_0(\vec k)|^2  \,d \vec k=:20c_1T^2,
\end{equation}
\begin{equation}\label{c_1}
c_1=c_1(\Psi_0):=\frac{1}{160}\int_{\mathcalG_\infty} |\vec k|^2 |\hatt \Psi_0(\vec k)|^2  \,d \vec k.
\end{equation}
For $R_2$, let us introduce a new variable $\vec z:=\vec x/t$ and consider
 \begin{equation}\label{Nov-28-1}
 \wti\phi(\vec zt,t)=
 \frac{1}{2\pi}\int_{\wti{\mathcalG}_\infty} e^{it\left(\la \vec k,\vec z\ra
    -\lambda_\infty(\vec k)\right)}g_3(\vec k)\, d\vec k.
 \end{equation}
We use the method of stationary phase and integration by parts.  Considering \eqref{May31} and Lemma~\ref{lambda}, we conclude that the equation for a stationary point
 $$
 \vec z- \nabla \lambda_\infty (\vec k) =0
 $$
 has a unique solution $\vec k_0(z):=\vec k_0$  and
 $$
 \vec k_0 = \frac{1}{2} \vec z + O(|\vec z|^{-\gamma_5}),\ \ \gamma _5>0.
 $$
Let $\eta$ be a smooth cut-off function satisfying
 $$
 \eta(\vec k)=
  \begin{cases}
   0, & \left| \vec k -\vec k_0 \right| \le 1 \\
   1, & \left| \vec k -\vec k_0 \right| \ge 2
  \end{cases}
 .
 $$
 Then,
 \begin{align}\label{Nov-28-2}
 \wti\phi(\vec zt,t)&=
\frac{1}{2\pi} \int_{\wti{\mathcalG}_\infty \cap \{\vec k \, :\, \left| \vec k - \vec k_0 \right| < 2 \} } e^{it\left(\la \vec k,\vec z\ra
    -\lambda_\infty(\vec k)\right)}g_3(\vec k)\bigl(1-\eta(\vec k)\bigr)\, d\vec k \\
    &+  \frac{1}{2\pi} \int_{\wti{\mathcalG}_\infty \cap \{\vec k:\, \left| \vec k - \vec k_0 \right| > 1 \} } e^{it\left(\la \vec k,\vec z\ra
    -\lambda_\infty(\vec k)\right)}g_3(\vec k) \eta(\vec k)\, d\vec k\\
    &=: \wti\phi_1(\vec zt,t)+\wti\phi_2(\vec zt,t).
 \end{align}
To estimate $\wti\phi_1(\vec zt,t)$, we first note that $g_3(1-\eta) \in {\mathcal C}_0^4(\R^2)$ and   $ \la \vec k,\vec z\ra
    -\lambda_\infty(\vec k) \in {\mathcal C^7}(\R^2)$, the estimate \eqref{eigenvalue} holding for $|m|\leq 7$ with $-\gamma _2+7\gamma _0<0$.
Therefore, applying Theorem~7.7.5 in \cite{H1} yields:
\begin{equation}\label{tilde_phi_1}
 \wti\phi_1(\vec zt,t)=\frac{1}{2i} e^{it\left(\la \vec k_0,\vec z\ra -\lambda_\infty(\vec k_0)\right)}
 \left(1+O(|\vec z|^{-\gamma_5})\right)g_3(\vec k_0)t^{-1} + \epsilon(g_3)t^{-2}
 \end{equation}
 for $|z|^2>\lambda_*$ and $0$ otherwise. Here
 $$
 |\epsilon(g_3)| \leq c \sum_{{|m| \leq 4}}\sup_{\left| \vec k -\vec k_0 \right| <2 } |D^{m}g_3(\vec k)| \leq c\left\||\vec k|^3\hatt \Psi_0(\vec k)\right\|_{\mathcal C^4(\R^2)} \delta ^{-4}|\vec z|^{-2},
 $$
 Next, we consider  $\wti\phi_2 (\vec zt,t)$. There is no stationary point. Integrating by parts twice, we obtain
 \begin{equation}\label{tilde_phi_2}
 |\wti\phi_2(\vec zt,t)| \leq C(\hatt\Psi _0) (\delta t)^{-2}(1+|\vec z|)^{-2},
 \end{equation}
 where $C(\hatt\Psi _0)$ is a combination of integrals of the type $\int |\vec k|^j|D^{m }\hatt\Psi _0 (\vec k)|d\vec k $, $0\leq j \leq 3$, $0\leq |m |\leq 2$.

Now, we consider $\|\wti\phi(\vec x,t)\|_{L^2(\R^2\setminus B_R)}^2$. Using the estimates \eqref{tilde_phi_1} and
 \eqref{tilde_phi_2}, we obtain
 $$
 \|\wti\phi(\vec x,t)\|_{L^2(\R^2\setminus B_R)}^2
 = t^2 \|\wti\phi(\vec zt,t)\|_{L^2(\R^2\setminus B_{R/t})}^2
 \leq \int _{\R^2\setminus B_{c_0T/t}}|g_3(\vec k_0 (\vec z))|^2\, d\vec z +O(t^{-1})$$
as $t\to \infty $, the constant in $O(t^{-1})$ depending on $\delta$ and $\hatt\Psi _0$.
Next, substituting the above estimate   into the formula for $R_2$ (see \eqref{RR}) and changing the variables $s=t/T$, we obtain:
 \begin{align}\label{R_2}
R_2 &
  \leq T^2 \int _0^\infty s^2e^{-2s}\int _{\R^2\setminus B_{c_0/s}}|g_3(\vec k_0 (\vec z))|^2\, d\vec z  \,ds +O(T).
 \end{align}
By  Lebesgue's Dominated Convergence Theorem, the integral on \eqref{R_2} goes to zero when $c_0 \to \infty$ uniformly in $\delta $.
We choose $c_0$ large enough to ensure that
 $$
R_2\leq {c_1} T^2 +c T,
 $$
the constant $c_1$ being defined by \eqref{c_1}. Notice that the choice of $c_0$ depends  on $\hatt\Psi_0$, but not $\delta $.
Considering the last estimate together with \eqref{R1}, we obtain \eqref{ineq:I_1}.

\end{proof}

\begin{proof}[Proof of \eqref{ball-1}]
After $c_0$ is fixed as above we choose a sufficiently small $\delta=\delta (c_0,\hatt \Psi _0)$ so that the constant $\gamma c_0^2$ from \eqref{Ineq:remainder} is smaller than $c_1$. Thus, we obtain:
 \begin{equation}\label{ball-2}
 \frac{2}{T}\int _0^\infty e^{-2t/T} \big\|X \Psi (\cdot,t)\big \|^2_{L^2(\R^2)} dt >2c_1(\Psi _0)T^2-c_2(\Psi _0)T-c_3(\Psi _0),  \ \ c_1>0.
 \end{equation}
Taking $T$ sufficiently large, we obtain \eqref{ball-1} for any non-zero $\Psi _0\in E_{\infty }{\mathcal C_0^\infty}$.

\end{proof}


\section{Proof of Theorem \ref{Thm1}} \label{Thm1proof}

Now we prove Theorem~\ref{Thm1}.
Let $\Psi_0 \in {\mathcal C_0^\infty}(\R^2)$ then by Lemma~\ref{Ttransform}, $ \hatt\Psi_0 \in {\mathcal C^L}$ decays faster than any polynomials of degree at most $L$,
where
 \begin{equation}
 \hatt\Psi_0 (\vec k) = (T_\infty \Psi_0)(\vec k)= \frac{1}{2\pi}\int_{\R^2} \overline{\Psi_\infty (\vec k , \vec x)} \Psi_0(\vec x)\,d\vec x .
 \label{hat}\end{equation}
We denote
 $$
 \Psi_{0, \text{ac}} := E_\infty(\mathcalG_\infty)\Psi_0 = \frac{1}{2\pi}\int_{\mathcalG_\infty} \Psi_\infty (\vec k , \vec x) \hatt\Psi_0(\vec k)\,d\vec k
 $$
and
 $$
 \Psi_{0, \text{s}} := \Psi_0-\Psi_{0, \text{ac}} .
 $$
We notice that $\Psi_{0, \text{s}}\perp E_\infty L^2(\R^2)$ and $\| \Psi_{0, \text{s}}\|_{L^2(\R^2)}\leq \| \Psi_{0}\|_{L^2(\R^2)}$.
Assume that $\Psi_{0, \text{ac}}$ is not identically zero. We put
 $$
 \Psi(\vec x ,t)= \Psi_{\text{ac}}(\vec x ,t) + \Psi_{\text{s}}(\vec x ,t) := e^{-itH}\Psi_{0, \text{ac}} +  e^{-itH}\Psi_{0, \text{s}}.
 $$
As in the proof of Theorem \ref{Thm1}, we use
 $$
 \| X \Psi\|_{L^2(\R^2)} \geq  \| X \Psi\|_{L^2(B_R)}
 $$
and approximate $\Psi_{\text{ac}}$ by $w$ defined as in \eqref{Def:w}.
Next, we rewrite
 $$
 \|X(\Psi_{\text{s}}+w)\|_{L^2(B_R)}^2=\|X\Psi_{\text{s}}\|_{L^2(B_R)}^2+\|Xw\|_{L^2(B_R)}^2+2\Re(X\Psi_{\text{s}},Xw)_{L^2(B_R)}.
 $$
 Let us note that $(X\Psi_{\text{s}},Xw)_{L^2(B_R)}=(\Psi_{\text{s}},X^2w)_{L^2(B_R)}$
 and consider its integral over $t$:
 \begin{equation}\label{cross}
\hat I:= \frac{2}{T}\int_0^\infty  e^{-2t/T}|(\Psi_{\text{s}},X^2w)_{L^2(B_R)}|\, dt .
 \end{equation}
Considering \eqref{Ineq:main},
we see that it is enough to show that $\hat I$ is small compared with the r.h.s.\ of  \eqref{Ineq:main}. We achieve this by proving that $X^2 w$ is orthogonal to $\Psi_{\text{s}}$ up to minor terms. Indeed, by \eqref{Def:w} and \eqref{aplane},
 \begin{align*}
 (X^2w)(\vec x, t)
 &=\frac{1}{2\pi} \int_{\tilde\mathcalG_\infty} |\vec x|^2\Psi_\infty(\vec k,\vec x)e^{-it\lambda_\infty(\vec k)}\hatt{\Psi}_{0}(\vec k)\eta_\delta(\vec k) \,d\vec k\\
 &=-\frac{1}{2\pi}\int_{\tilde\mathcalG_\infty} (\Delta_{\vec k} e^{i \la \vec k, \vec x \ra})(1+ u_\infty(\vec k, \vec x))
   e^{-it\lambda_\infty(\vec k)}\hatt{\Psi}_{0}(\vec k)\eta_\delta(\vec k)\,d\vec k .
 \end{align*}
Using $g_4(\vec k, \vec x):=(1+ u_\infty(\vec k, \vec x))\hatt{\Psi}_{0}(\vec k)\eta_\delta(\vec k)$  and applying integration by parts as above, we obtain
 \begin{align*}
 (X^2w)(\vec x, t)
 &=t^2 \frac{1}{2\pi}\int_{\tilde \mathcalG_\infty} e^{i \la \vec k, \vec x \ra} \left|\nabla \lambda_\infty (\vec k)\right|^2 e^{-it\lambda_\infty(\vec k)} g_4(\vec k, \vec x) \,d\vec k\\
 &\hskip .5cm - \frac{1}{2\pi}\int_{\tilde \mathcalG_\infty} e^{i \la \vec k, \vec x \ra} e^{-it\lambda_\infty(\vec k)} \Delta_{\vec k}g_4(\vec k, \vec x) \,d\vec k\\
 &\hskip .5cm +t\frac{i}{\pi} \int_{\tilde\mathcalG_\infty} e^{i \la \vec k, \vec x \ra} e^{-it\lambda_\infty(\vec k)}
  \left\la \nabla \lambda_\infty(\vec k),\nabla_{\vec k} g_4(\vec k, \vec x) \right\ra  \,d\vec k\\
 &\hskip .5cm +t \frac{i}{2\pi}\int_{\tilde\mathcalG_\infty} e^{i \la \vec k, \vec x \ra} \left(\Delta \lambda_\infty(\vec k)\right) e^{-it\lambda_\infty(\vec k)} g_4(\vec k, \vec x) \,d\vec k .
 \end{align*}
The last three integrals can be estimated as in the proof of \eqref{Ineq:main} (see \eqref{ineq:I_2},\eqref{ineq:I_3}), and the corresponding contribution to $\hat I$ is bounded by a linear function of $T$ for every fixed $\delta >0$. For the first integral, we have
 \begin{align*}
 &t^2\frac{1}{2\pi}\int_{\tilde\mathcalG_\infty} e^{i \la \vec k, \vec x \ra}  \left|\nabla \lambda_\infty (\vec k)\right|^2 e^{-it\lambda_\infty(\vec k)}g_4(\vec k, \vec x) \,d\vec k\\ &=t^2\frac{1}{2\pi}\int_{\mathcalG_\infty}\Psi_\infty(\vec k,\vec x)  \left|\nabla \lambda_\infty (\vec k)\right|^2 e^{-it\lambda_\infty(\vec k)}\hatt{\Psi}_{0}(\vec k) \,d\vec k\\
 &\hskip .5cm  +t^2\frac{1}{2\pi}\int_{\tilde\mathcalG_\infty\setminus\mathcalG_\infty} \Psi_\infty(\vec k,\vec x) \left|\nabla \lambda_\infty (\vec k)\right|^2
 e^{-it\lambda_\infty(\vec k)}\hatt{\Psi}_{0}(\vec k) \eta_\delta(\vec k) \,d\vec k\\
 &=:t^2(J_1+J_2).
 \end{align*}
Obviously, $\|J_2\|_{L^2(\R^2)}=o(1)$ as $\delta\to 0$ uniformly in $t$ (cf. \eqref{Ineq:f_{n,j}}) and its contribution to $\hat I$ is bounded by $\gamma T^2$, where $\gamma(\delta,\Psi_0)\to 0$ as $\delta\to 0$ . To estimate the contribution from $J_1$ we notice that $J_1=E_\infty(\mathcalG_\infty)J_1$ and, thus, we arrive at the main point of the proof:
 $$
 (\Psi_{s},J_1)_{L^2(B_R)}=-(\Psi_{s},J_1)_{L^2(\R^2\setminus B_R)}.
 $$
It remains to estimate
\begin{equation}\label{I1} \hat I_1=\frac{2}{T}\int_0^{\infty}t^2e^{-2t/T}|(\Psi_{s},J_1)_{L^2(\R^2\setminus B_R)}|dt. \end{equation}
It is easy to see that
 \begin{equation}\label{last}
  \begin{split}
  &\hat I_1\leq \frac{2}{T}\int_0^{\infty}t^2e^{-2t/T}(\epsilon\|\Psi_{s}\|^2_{L^2(\R^2\setminus B_R)}+\frac{1}{4\epsilon}\|J_1\|^2_{L^2(\R^2\setminus B_R)})dt\\
  &\leq \epsilon C(\Psi_0)T^2+\frac{1}{\epsilon T}\int_0^{\infty}t^2e^{-2t/T}(\| J_1 + J_2 \|^2_{L^2(\R^2\setminus B_R)}+\|J_2\|^2_{L^2(\R^2\setminus B_R)})dt.
  \end{split}
\end{equation}
The estimate for the integral with  $J_1+J_2$ is similar to the estimate for $R_2$ (see \eqref{R_2}), while the estimate for the integral with $J_2$ repeats the proof for \eqref{Ineq:f_{n,j}}. Thus, \eqref{last} is bounded by
 $$
 \epsilon C(\Psi_0)T^2+\frac{1}{2\epsilon}\left(T^2\hat \gamma(c_0,\Psi_0)+C(\Psi_0,\delta )T+T^2 \gamma (\delta,\Psi_0)\right),
 $$
where $\hat \gamma(c_0,\Psi_0)\to 0$ as $c_0 \to \infty$ and $\gamma (\delta,\Psi_0)\to 0$ as $\delta\to 0$.
Now, one chooses small $\epsilon$, then large $c_0$, small $\delta$ and large $T_0$ to prove \eqref{cross}.

\begin{remark} \label{smoothEFE}

(a) The above proofs show that Theorem~\ref{Thm1} remains true if we replace $C_0^{\infty}$ in (\ref{nontriv}) with
\[ {\mathcal S}_8 := \{f:|x|^sD^mf(x) \in L^2(\R^2), 0\leq s,|m| \leq 8\}\]
 i.e.\ for initial conditions which are sufficiently smooth and of sufficiently rapid power decay. This is a consequence of the fact that the assumption of Lemma~\ref{Ttransform} can be weakened accordingly, see the proof in Section~\ref{App2} below.

(b) Using the constructions in the above proofs, we can now also describe more explicitly how to choose initial conditions $\Psi_0$ for the solution of (\ref{IVP}) which give simultaneous ballistic upper and lower bounds. Essentially, one has to regularize elements in the range of $E_{\infty}$ in two different ways, one at the boundary of ${\mathcal G}_{\infty}$, using the cut-off function $\eta_{\delta}$ as in (\ref{eta-delta}) above, and one at high momentum $\vec{k}$. For the latter, let $\varphi \in {\mathcal S}_8$ on $\R^2$  and such that $\varphi$ does not vanish identically on ${\mathcal G}_{\infty}$.

Choose
\begin{equation}
\Psi_0(\vec x) :=\frac{1}{2\pi}\int_{\wti{\mathcalG}_\infty} \varphi(\vec k) \,\eta_\delta(\vec k) \,\Psi_\infty(\vec k,\vec x) \,d\vec k.
\end{equation}
As $\delta\to 0$ this converges to $F_0(\vec x) = \frac{1}{2\pi} \int_{{\mathcal G}_{\infty}} \varphi(\vec x) \Psi_{\infty}(\vec k, \vec x)\, d\vec k$ in the range of $E_{\infty}$ with $\|F_0\|^2 = \int_{{\mathcal G}_{\infty}} |\varphi|^2\, d{\vec k}/(4\pi^2)\not=0$. Thus, for $\delta>0$ sufficiently small, $E_{\infty} \Psi_0 \not=0$.

Furthermore, our methods, in particular those provided in the Appendices in Section~\ref{appendices}, show that the choice of $\varphi \in {\mathcal S}_8$ gives $\Psi_0 \in {\mathcal S}_8$ . Thus the initial condition $\Psi_0$ leads to a ballistic lower bound on transport. At the same time the condition of \cite{RaSi} for the ballistic upper bound (\ref{genballistic}) is satisfied.

\end{remark}

\section{Appendices} \label{appendices}

Here we provide detailed proofs of some of the facts which were used in Sections~\ref{SpecPropH} and \ref{Prop2proof} above.

\begin{remark}\label{raz}
Using the a priori estimates (see \cite{KL3}) for the solutions $\Psi_n$ from \eqref{na} and their Fourier coefficients defined by
  \begin{align} \label{raz*}
  \Psi_n(\vec k, \vec x) &= e^{i \la \vec k,\vec x \ra}\left(1+u_n(\vec k, \vec x)\right), \\
  u_n(\vec k, \vec x) &= \sum_{r \in \Z^2} C_r^{(n)}(\vec k) e^{i \la \vec p_r^{(n)},\vec x \ra}, \label{raz**}
  \end{align}
 $\vec p_r^{(n)}$ being vectors of the dual lattice corresponding to $W_n$, and repeating the arguments which led to Lemma~\ref{lambda}, one can obtain that the extended coefficients are sufficiently smooth and satisfy estimates of the type  \eqref{eigenvalue-1}, \eqref{derivative-eigenvalue-n*}, \eqref{eigenvalue}. We omit the details.
\end{remark}
\subsection{Appendix 1}

\begin{lem} \label{Lem2}
The sequence of operators $S_n(\mathcal{G}_{\infty , \lambda
    })$ given by \eqref{SnGDef} has a  limit $S_{\infty }(\mathcal{G}_{\infty , \lambda
    })$ in the class of bounded operators.
The convergence of $S_n(\mathcal{G}_{\infty , \lambda
    })$ to $S_{\infty }(\mathcal{G}_{\infty , \lambda
    })$  is uniform in $\lambda $ and estimate \eqref{June5} holds.
\end{lem}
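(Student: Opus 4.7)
My plan is to represent $S_n(\mathcal{G}_{\infty,\lambda})-S_0(\mathcal{G}_{\infty,\lambda})$ as a convergent series of modulated Fourier operators indexed by the dual lattice, bound each summand using Plancherel, and sum using the super-exponential decay of the Fourier coefficients provided by \cite{KL3}. Writing
\[ \Psi_n(\vec k,\vec x) = e^{i\langle\vec k,\vec x\rangle} + \sum_{r\in\Z^2} C_r^{(n)}(\vec k) e^{i\langle \vec k+\vec p_r^{(n)},\vec x\rangle} \]
(using \eqref{raz*}--\eqref{raz**}), we obtain
\[ \bigl((S_n-S_0)(\mathcal{G}_{\infty,\lambda}) f\bigr)(\vec x) = \frac{1}{2\pi}\sum_{r} \int_{\mathcal{G}_{\infty,\lambda}} f(\vec k)\, C_r^{(n)}(\vec k)\, e^{i\langle \vec k+\vec p_r^{(n)},\vec x\rangle} \, d\vec k. \]
Each term is essentially a Fourier multiplier followed by a translation, so Plancherel gives operator-norm bound $\sup_{\vec k\in\mathcal{G}_{\infty,\lambda}}|C_r^{(n)}(\vec k)|$ per summand.

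The first step is to show absolute summability of this series with a tail estimate. The a priori bounds in \cite{KL3}, together with Remark~\ref{raz}, give $|C_r^{(n)}(\vec k)|\le c\,|\vec k|^{-\gamma_1}\,e^{-\eta \cdot 2^{\rho(r)}}$-type decay in the Fourier index, uniform in $n$ and in $\vec k\in\mathcal{G}_{\infty,\lambda}$ with $\lambda\ge\lambda_*$. Since the dual lattice for period $2^{n-1}\vec d_i$ refines that for $2^{m-1}\vec d_i$ (so modes for $H_m$ embed canonically into modes for $H_n$), for $n>m$ the difference $S_n-S_m$ is given by an identical series with coefficients $C_r^{(n)}-C_r^{(m)}$ on common modes and $C_r^{(n)}$ on the new modes. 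The corresponding a priori bounds on $C_r^{(n)}-C_r^{(m)}$ (analogous to those used in \cite{KL3} to get convergence of $\Psi_n$ to $\Psi_\infty$ in $L^\infty$, cf.\ the estimate \eqref{s++app} referenced after \eqref{May31a}) yield a bound of the form $c\,|\vec k|^{-\gamma_1}\,e^{-\eta'\cdot 2^{m}}$ for an effective rate, which is summable over $r$ and decays as $m\to\infty$. Hence $\{S_n(\mathcal{G}_{\infty,\lambda})\}$ is Cauchy in operator norm, with a limit that must coincide (by the strong convergence already proved in \cite{KL3}) with $S_\infty(\mathcal{G}_{\infty,\lambda})$.

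Uniformity in $\lambda$ is automatic because every estimate above is in a supremum over $\vec k\in\mathcal{G}_{\infty,\lambda}$, and the bounds on $C_r^{(n)}$ only improve as $|\vec k|$ grows; restricting to the subset $\mathcal{G}_{\infty,\lambda}\cap\{|\vec k|^2\ge\lambda_*\}$ gives a factor $\lambda_*^{-\gamma_1/2}$ (or a related negative power $\lambda_*^{-\gamma_6}$) outside the sum, which yields \eqref{June5} directly. The passage to $n=\infty$ in the Cauchy estimate requires the limit of $C_r^{(n)}(\vec k)$ to inherit the same bounds; this follows from the termwise limit together with uniform majorants, identical to the argument used for the $L^\infty$ convergence of $\Psi_n$ to $\Psi_\infty$ in \cite{KL3}.

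The main obstacle I expect is purely bookkeeping: aligning the dual lattices $\vec p_r^{(n)}$ across scales and splitting the difference $S_n-S_m$ cleanly into ``common mode'' contributions (with coefficient differences $C_r^{(n)}-C_r^{(m)}$) and ``new mode'' contributions (with coefficient $C_r^{(n)}$ alone), then verifying that the a priori estimates of \cite{KL3} apply in the correct form to each type. Once this organisation is in place, the Plancherel bounds and the super-exponential Fourier decay \eqref{be}, propagated through the perturbation scheme, do the work and give both norm convergence and the quantitative estimate \eqref{June5} with a uniform constant.
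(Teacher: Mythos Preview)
Your framework is the same as the paper's: expand $\Psi_n(\vec k,\vec x)=e^{i\langle\vec k,\vec x\rangle}(1+u_n)$ in Fourier modes, apply Parseval termwise, and sum via the triangle inequality. However, the paper's execution differs from yours in a way that sidesteps precisely the bookkeeping obstacle you anticipate.

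Rather than comparing $S_n$ and $S_m$ by splitting into ``common modes'' and ``new modes'' across nested dual lattices and then tracking individual coefficient differences $C_r^{(n)}-C_r^{(m)}$, the paper telescopes via consecutive differences $S_{n+1}-S_n$ and bounds \emph{all} Fourier coefficients of $\Psi_{n+1}-\Psi_n$ at once through the Sobolev norm: from
\[
p_r^4(0)\,\bigl|C_r^{(n+1)}(\vec k)-C_r^{(n)}(\vec k)\bigr|\ \le\ C\,|\vec k|^4\,\tilde N_{n+1}^4\,|Q_{n+1}|^{-1/2}\,\|\Psi_{n+1}(\vec k,\cdot)-\Psi_n(\vec k,\cdot)\|_{W_2^4(Q_{n+1})}
\]
and $\sum_{r\neq 0}p_r^{-4}(0)<\infty$, one gets $\sum_r |C_r^{(n+1)}-C_r^{(n)}|$ controlled by a single quantity, namely \eqref{s++app}, which is already of size $k^4\epsilon_n^3$. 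No lattice alignment or mode-by-mode tracking is needed. To handle the $|\vec k|$-dependence the paper also decomposes $f$ into dyadic rings $R_{k,2k}$ and recombines via Cauchy--Schwarz; you bypass this by taking a global supremum, which would work but loses the explicit $k$-dependence that produces the $\lambda_*^{-\gamma_6}$ factor.

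One caution: the specific decay you invoke, $|C_r^{(n)}(\vec k)|\le c\,|\vec k|^{-\gamma_1}e^{-\eta\cdot 2^{\rho(r)}}$, with exponential decay in the Fourier index $r$, is not the form of bound that \cite{KL3} supplies; what is available is the polynomial decay $p_r^{-4}$ coming from the $W_2^4$ estimate. The super-exponential smallness $\epsilon_n$ lives in the \emph{level} $n$ (through $\|\Psi_{n+1}-\Psi_n\|$), not in the lattice index $r$. Your argument can be repaired along these lines, but as written it leans on a coefficient bound that has not been established.
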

\begin{proof}
It suffices to prove  that $S_n({{\mathcal G}}_{\infty , \lambda
    })f $ is a Cauchy sequence.
  Given $Q_n$ is the cell of periods of the operator
    $H^{(n)}$,
     the function $\Psi _n(\vec k,x)$ is quasi-periodic in $Q_n$. It can be
    represented as a  combination of plane waves \eqref{raz*}, \eqref{raz**}.
The
Fourier transform of $\widehat \Psi_n$ is a combination of $\delta
$-functions:
    $$\widehat \Psi _n(\vec k,\vec \xi)=\sum _{r\in \Z^2}C_r^{(n)}(\vec k)
    \delta \bigl(\vec \xi +\vec k+\vec p_r(0)/\tilde N_{n}\bigr).$$
From this, we compute easily the Fourier
    transform of $S_nf $
    $$ (\widehat{S_nf})(\vec \xi)=\frac{1}{2\pi }\sum _{r\in \Z^2}C_r^{(n)}\bigl(-\vec \xi-
    \vec p_r(0)/\tilde N_{n}\bigr)f \bigl(-\vec \xi-
    \vec p_r(0)/\tilde N_{n}\bigr)\chi \bigl({\mathcal G}_{\infty ,\lambda},-\vec \xi-
    \vec p_r(0)/\tilde N_{n}\bigr),$$
where $\chi ({\mathcal G}_{\infty ,\lambda},\cdot ) $ is the
characteristic function on ${\mathcal G}_{\infty
    ,\lambda }$. Since ${\mathcal G}_{\infty ,\lambda }$ is bounded, the series contains only a
    finite number of non-zero terms for every $\vec \xi $.
    By Parseval's identity, triangle
    inequality and a parallel shift of the variable,
    $$
    \|S_nf\|_{L^2(\R^2)}=\|\widehat{S_nf}\|_{L^2(\R^2)}
    $$
    $$
     \leq \frac{1}{2\pi }\sum _{r\in \Z^2}\left \|C_r^{(n)}\bigl(-\vec \xi-
    \vec p_r(0)/\tilde N_{n}\bigr)f \bigl(-\vec \xi-
    \vec p_r(0)/\tilde N_{n}\bigr)\chi \bigl({\mathcal G}_{\infty , \lambda},-\vec \xi-
    \vec p_r(0)/\tilde N_{n}\bigr)\right\|_{L^2(\R^2)}$$ $$= \frac{1}{2\pi }\sum _{r\in \Z^2}\|C_r^{(n)}(\vec k )f (\vec k )\|_{L^2({\mathcal G}_{\infty ,
    \lambda})}.$$
  Assume first that the support of $f $ belongs to a ring
    ${R_{k,2k}}$ for some $k$ such that $k^{2}>\lambda
    _*(V)$.
    Then, the last inequality yields:
    \begin{equation}\|S_nf\|_{L^2(\R^2)}\leq \frac{1}{2\pi }\|f \|_{L^{2}(R_{k,2k})}\sum _{r\in \Z^2}\|C_r^{(n)}\|_{L^{\infty}(R_{k,2k})}.
    \label{appendix} \end{equation}
By (\ref{raz**}),  Fourier coefficients $C_r^{(n)}(\vec k)$ can
be estimated as follows:
$$p_r^{4}(0)|C_r^{(n)}(\vec k)|
    \leq 2\pi \|\Psi _n(\vec k,\cdot )\exp \left(-i\langle \vec k,\cdot \rangle\right)\|_{W^{4}_2(Q_n)}
    |Q_n|^{-1/2}\tilde N_{n}^{4}$$ $$\leq 16\pi|\vec k|^{4}\|\Psi _n(\vec k,\cdot)\|_{W^{4}_2(Q_n)}
    |Q_n|^{-1/2}\tilde N_{n}^{4}
    .$$
Considering that $\sum_{r\neq 0} p_r^{-4}(0)<c$,
we obtain:
$$\sum_{r \in \Z^2}\|C_r^{(n)}\|_{L^{\infty}(R_{k,2k})}<c\sup _{\vec k \in R_{k,2k }}\left(|\vec k|^{4}\|\Psi _n(\vec k,\cdot)\|_{W^{4}_2(Q_n)}
    |Q_n|^{-1/2}\tilde N_{n}^{4}\right)
    .$$
Using \eqref{appendix}, we arrive at
     $$ \|S_nf\|_{L^2(\R^2)}<ck^{4} \|f\|_{L^{2
   }(R_{k,2k})}
   \sup_{\vec k \in R_{k,2k }}\left( |Q_{n}|^{-1/2}\tilde N_{n}^{4}\sup _{\vec k \in R_{k,2k}}\|\Psi _n(\vec k,\cdot)\|_{W^{4}_2(Q_n)}\right).$$
Similarly,
    \begin{multline*}
     \|(S_{n+1}-S_n)f\|_{L^2(\R^2)} < \\
    c k^{4}\|f\|_{L^{2 }(R_{k,2k})}
\sup_{\vec k \in R_{k,2k }}\left(     |Q_{n+1}|^{-1/2}\tilde N_{n+1}^{4}
    \sup _{\vec k\in R_{k,2k}}
    \|\bigl(\Psi _{n+1}(\vec k, \cdot)- \Psi _n(\vec k,\cdot)\bigr)\|_{W^{4}_2
    (Q_{n+1})}\right).
    \end{multline*}
    It is proven in \cite{KL3} (Section 6.2) that
     \begin{equation}\|\Psi _{n+1}(\vec k, \cdot)- \Psi _n(\vec k, \cdot)\|_{L^2(Q_{n+1})} <
    c\epsilon_n^{3}|Q_{n+1}|^{1/2},\ \ n \geq 1, \ \epsilon _n=e^{-\frac 14k^{\eta s_n}}, \
    \mbox{when}\ \ \vec k \in R_{k,2k}.
    \label{s++}
    \end{equation} Applying the equation for eigenfunctions twice, we arrive to:
     \begin{equation}\|\Psi _{n+1}(\vec k, \cdot)- \Psi _n(\vec k, \cdot)\|_{W_2^{4}(Q_{n+1})} <
    ck^{4}\epsilon_n^{3}|Q_{n+1}|^{1/2},\ \ \ n \geq 1, \ \
    \mbox{when}\ \ \vec k \in R_{k,2k}.
    \label{s++app}
    \end{equation} Using the last estimate, we obtain
    \begin{equation} \|(S_n-S_{n+1})f\|_{L^2(\R^2)}
    \leq ck^{8}\|f\|_{L^{2}(R_{k,2k})}
   \sup _{\vec k \in R_{k,2k}}\left( \tilde N_{n}^{4} \epsilon_n^{3}\right).
    \end{equation}
    when the support of $f$ is in $R_{k,2k}$.
Considering that $\epsilon_n$ decays super-exponentially with $n$
(see the formula above  (\ref{eigenvalue-n*}) and the estimate $\tilde N_{n}\approx k^{s_n}$, we
conclude that
\begin{equation} \|(S_n-S_{n+1})f\|_{L^2(\R^2)}
    \leq c\|f\|_{L^{2}(R_{k,2k})}\epsilon_n^{2}(k),
  \label{f1}
    \end{equation}
i.e., $S_nf $ is a Cauchy sequence in $L^2(\R^2)$ for every
$f \in L^{2}\left(R_{k,2k}\right)$.

If $f \in L^{2}({\mathcal G}_{\infty ,\lambda })$, then we can
express it as a sum of  functions $f _k $ such that $f _k $ has
support in $R_{k,2k}$. 
Summing up estimates
(\ref{f1}) over all $k$ and using the Cauchy-Schwarz inequality on the right, we easily see that:
\begin{equation} \|(S_n-S_{n+1})f\|_{L^2(\R^2)}
    \leq c\|f\|_{L^{2}({\mathcal G}_{\infty ,\lambda })}\epsilon_n^{}(k_{*}),\ \ \ n\geq 1,
  \label{f1a}
    \end{equation}
i.e., $S_n$ is a Cauchy
sequence in the space of bounded operators. We denote the limit of $S_n({\mathcal G}_{\infty
,\lambda })f $ by $S_{\infty }({\mathcal G}_{\infty ,\lambda })f $. By Theorem 2.3 in \cite{KL3}
\begin{equation} \|(S_0-S_{1})\|_{L^2(\R^2)}
<\lambda_*^{-\gamma _6}. \end{equation}
Estimate \eqref{June5} easily follows.
\end{proof}

\subsection{Appendix 2 (Proof of Lemma 2.2)} \label{App2}
Using \eqref{raz*}, \eqref{raz**}
 and integrating by parts $j$ times in  \eqref{eq2}, we obtain:
  \begin{eqnarray} \label{T_n}
  \left| (T_n F)(\vec k) \right| & \leq & \left(\sum_{r: |\vec k +\vec p_r^{(n)}| \geq |\vec k/4|} \frac{|C_r^{(n)}(\vec k)|}{|\vec k + \vec p_r^{(n)}|^j}\right) \|F\|_{W^1_j(\R^2)} \nonumber \\
  & & + \left(\sum_{r: |\vec k +\vec p_r^{(n)}| < |\vec k/4|} |C_r^{(n)}(\vec k)|\right) \|F\|_{L^1(\R^2)}.
  \end{eqnarray}
 Noting that
  \begin{equation}\label{C_r}
  \sum_r  |C_r^{(n)}(\vec k)| < 2,
  \end{equation}
 we can estimate the first term in the right hand side of \eqref{T_n}, i.e.,
  \begin{equation}\label{C_r bound large}
  |\vec k|^j \left( \sum_{r: |\vec k +\vec p_r^{(n)}| \geq |\vec k/4|} \frac{|C_r^{(n)}(\vec k)|}{|\vec k + \vec p_r^{(n)}|^j}\right) \|F\|_{W^1_j(\R^2)}
  \leq 2^{2j+2}  \|F\|_{W^1_j(\R^2)}.
  \end{equation}
 Next, since $H\Psi_n =\lambda_n\Psi_n$, we have
  \begin{equation}\label{innerproduct}
  \left(H\Psi_n, \f{e^{i \la \vec k+\vec p_r^{(n)},\cdot \ra}}{|Q_n|}\right) = \lambda_n(\vec k)C_r^{(n)}(\vec k).
  \end{equation}
 Note that
  $$
  \left(H\Psi_n, \f{e^{i \la \vec k+\vec p_r^{(n)},\cdot \ra}}{|Q_n|}\right)
  =\left(\Psi_n, \f{|\vec k+\vec p_r^{(n)}|^2 e^{i \la \vec k+\vec p_r^{(n)},\cdot \ra}}{|Q_n|}\right)
  +\left(\Psi_n, W_n\f{e^{i \la \vec k+\vec p_r^{(n)},\cdot \ra}}{|Q_n|}\right).
  $$
 Since the length of $V_r$ grows at most linearly with period, i.e., if $|\vec p_r^{(n)}-\vec p_{r'}^{(n)}| \geq R_0$, then $(W_n)_{r-r'}=0$,
 we get
  $$
  \left(\Psi_n, W_n\f{e^{i \la \vec k+\vec p_r^{(n)},\cdot \ra}}{|Q_n|}\right)
  = \sum_{r':|\vec p_{r'}^{(n)}-\vec p_r^{(n)}| < R_0} \left(\Psi_n, (W_n)_{r-r'}\f{e^{i \la \vec k+\vec p_{r'}^{(n)},\cdot \ra}}{|Q_n|}\right).
  $$
 Hence,
  \begin{equation}\label{recurse}
  C_r^{(n)}(\vec k)=(\lambda_n(\vec k)- |\vec k+\vec p_r^{(n)}|^2 )^{-1} \sum_{r':|\vec p_{r'}^{(n)}-\vec p_r^{(n)}| < R_0} (W_n)_{r-r'} C_{r'}^{(n)}(\vec k).
  \end{equation}
 and therefore
  \begin{align*}
  \sum_{r: |\vec k +\vec p_r^{(n)}| < |\vec k/2|} |C_r^{(n)}(\vec k)|
  & \leq \sum_{r: |\vec k +\vec p_r^{(n)}| < |\vec k/2|} \sum_{r':|\vec p_{r'}^{(n)}-\vec p_r^{(n)}| < R_0} \frac{|(W_n)_{r-r'}| |C_{r'}^{(n)}(\vec k)|}{|\lambda_n(\vec k)-|\vec k+\vec p_r^{(n)}|^2|}\\
  & \leq \frac{C}{|\vec k|^2} \sum_{r, r'} |(W_n)_{r-r'}| |C_{r'}^{(n)}(\vec k)| \leq \frac{C(V)}{|\vec k|^2}.
  \end{align*}
 By a recursive argument, while $j < k_{**}/(4R_0)$, we obtain
  \begin{equation}\label{C_r bound small}
  \sum_{r: |\vec k +\vec p_r^{(n)}| < |\vec k/4|} |C_r^{(n)}(\vec k)|
  \leq \frac{C(j, V)}{|\vec k|^j}.
  \end{equation}
 Using \eqref{C_r bound large} and \eqref{C_r bound small} in \eqref{T_n}, we obtain
  $$
  \left||\vec k|^j (T_n F)(\vec k)\right| \leq C(j,V,F).
  $$
The case $|m|>0$ can be covered using integration by parts. To obtain the decay rate for $D^m(C_r^{(n)}(\vec k)\eta_n(\vec k))$ one differentiates the recursive version of \eqref{recurse} and uses a priori estimates for $\sum_r|D^m(C_r^{(n)}(\vec k)\eta_n(\vec k))|$ (see \cite{KL3} and  (\ref{raz*}), \eqref{raz**}).


\subsection{Appendix 3 (Proof of \eqref{ineq:I_3})}

 We consider the limit-periodic case. A proof for the quasi-periodic case is analogous. Indeed, we can write $u_\infty(\vec k,\vec x)$ as follows:
 $$
 u_\infty(\vec k,\vec x)=\sum_{n=1}^\infty \wti u_n(\vec k, \vec x),
 $$
where $ \wti u_n(\vec k, \vec x)= \sum_{r \in \Z^2} \wti C_r^{(n)} e^{i\la {\vec p_r}^{(n)},\,\vec x\ra}$,
$\vec p_r^{(n)}$ are vectors of the dual lattice corresponding to $W_n$.
We obtain:
 \begin{align*}
 &\| \phi_s \|^2_{L^2(B_R)} \le \| \phi_s \|_{L^2(\R^2)}^2 \\
 &= \frac{1}{(2\pi)^2}\int_{\R^2} \int_{\wti {\mathcal G}_\infty}\!\!\int_{\wti {\mathcal G}_\infty}
  e^{i\la \vec k- \vec \xi, \vec x\ra} \la \nabla_{\vec k} \wti u_\infty(\vec k,\vec x), \nabla_{\vec \xi} \wti u_\infty(\vec \xi,\vec x)\ra e^{-it\bigl(\lambda_\infty(\vec k)-\lambda_\infty(\vec \xi)\bigr)}\\
 &\hskip 3cm \hatt{\Psi}_0(\vec k)\ol{\hatt{\Psi}_0(\vec \xi)}\eta_\delta(\vec k)\eta_\delta(\vec \xi)\,d\vec k\,d\vec \xi \,d\vec x\\
 &=\frac{1}{ (2\pi)^2}\sum_{n,m \in \N}\sum_{r,q \in \Z^2}
  \int_{\R^2} \int_{\wti {\mathcal G}_\infty}\!\!\int_{\wti {\mathcal G}_\infty}
  e^{i\la \vec k- \vec \xi +\vec p_r^{(n)}-\vec p_q^{(m)}, \vec x\ra} \la \nabla  \wti C_r^{(n)}(\vec k), \nabla \wti C_q^{(m)}(\vec \xi)\ra\\ & \hskip 3cm e^{-it\bigl(\lambda_\infty(\vec k)-\lambda_\infty(\vec \xi)\bigr)} \hatt{\Psi}_0(\vec k)\ol{\hatt{\Psi}_0(\vec \xi)}\eta_\delta(\vec k)\eta_\delta(\vec \xi)\,d\vec k\,d\vec \xi \,d\vec x\\
 &= \sum_{n,m \in \N}\sum_{r,q \in \Z^2}
  \int_{\wti {\mathcal G}_\infty}\!\!\int_{\wti {\mathcal G}_\infty}
   \delta(\vec k- \vec \xi +\vec p_r^{(n)}-\vec p_q^{(m)})\la \nabla  \wti C_r^{(n)}(\vec k), \nabla \wti C_q^{(m)}(\vec \xi)\ra \\
   &\hskip 3cm e^{-it\bigl(\lambda_\infty(\vec k)-\lambda_\infty(\vec \xi)\bigr)} \hatt{\Psi}_0(\vec k)\ol{\hatt{\Psi}_0(\vec \xi)}\eta_\delta(\vec k)\eta_\delta(\vec \xi)\,d\vec k\,d\vec \xi\\
 &=\sum_{n,m \in \N}\sum_{r,q \in \Z^2}
  \int_{\wti {\mathcal G}_\infty \cap  (\wti{\mathcal G}_\infty -\vec p_r^{(n)}+\vec p_q^{(m)})}\la \nabla  \wti C_r^{(n)}(\vec k), \nabla \wti C_q^{(m)}(\vec k +\vec p_r^{(n)}-\vec p_q^{(m)})
  \ra\\
 &\hskip 0.5cm  e^{-it\bigl(\lambda_\infty(\vec k)-\lambda_\infty(\vec k +\vec p_r^{(n)}-\vec p_q^{(m)})\bigr)}\hatt{\Psi}_0(\vec k)\ol{\hatt{\Psi}_0(\vec k +\vec p_r^{(n)}-\vec p_q^{(m)})}\eta_\delta(\vec k)\eta_\delta(\vec k +\vec p_r^{(n)}-\vec p_q^{(m)})\,d\vec k\\
 &\le \frac{1}{2}\sum_{n,m \in \N}\sum_{r,q \in \Z^2}
  \int_{\wti {\mathcal G}_\infty \cap (\wti {\mathcal G}_\infty -\vec p_r^{(n)}+\vec p_q^{(m)})}|\nabla  \wti C_r^{(n)}(\vec k)|\ |\nabla \wti C_q^{(m)}(\vec k +\vec p_r^{(n)}-\vec p_q^{(m)})| \ |\hatt{\Psi}_0(\vec k)|^2 \,d\vec k \\
 &+ \frac{1}{2} \sum_{n,m \in \N}\sum_{r,q \in \Z^2}
  \int_{\wti {\mathcal G}_\infty \cap (\wti {\mathcal G}_\infty -\vec p_r^{(n)}+\vec p_q^{(m)})}|\nabla  \wti C_r^{(n)}(\vec k)|\ |\nabla \wti C_q^{(m)}(\vec k +\vec p_r^{(n)}-\vec p_q^{(m)})| \\
  & \hskip 3cm |\hatt{\Psi}_0(\vec k +\vec p_r^{(n)}-\vec p_q^{(m)})|^2 \,d\vec k \\
 &=  \int_{\wti {\mathcal G}_\infty} \left[\sum_{n,m \in \N}\sum_{r,q \in \Z^2} |\nabla  \wti C_r^{(n)}(\vec k)|\ |\nabla \wti C_q^{(m)}(\vec k +\vec p_r^{(n)}-\vec p_q^{(m)})|\chi_{\wti {\mathcal G}_\infty} (\vec k +\vec p_r^{(n)}-\vec p_q^{(m)}) \right]|\hatt{\Psi}_0(\vec k)|^2\,d\vec k \\
 &\le c_4(V)\|\Psi_0\|_{L^2(\R^2)}^2,
 \end{align*}
since
$ \sum_{n,m \in \N}\sum_{r,q \in \Z^2} |\nabla  \wti C_r^{(n)}(\vec k)|\ |\nabla \wti C_q^{(m)}(\vec k +\vec p_r^{(n)}-\vec p_q^{(m)})|\chi_{\wti {\mathcal G}_\infty} (\vec k +\vec p_r^{(n)}-\vec p_q^{(m)})$ is bounded uniformly in $\vec k\in \wti {\mathcal G}_\infty$, say, it is bounded by $c_4(V)$, see \cite{KL3}.

\end{document}